\tikzstyle{none}=[inner sep=0pt]
\tikzstyle{port}=[inner sep=0pt]
\tikzstyle{component}=[circle, fill=white, draw=black, inner sep=1pt]
\tikzstyle{integral}=[inner sep=0pt]
\tikzstyle{differential}=[inner sep=0pt]
\tikzstyle{codifferential}=[inner sep=0pt]
\tikzstyle{function1}=[fill=white, draw=black, minimum size=1cm]
\tikzstyle{function2}=[fill=white, draw=black, minimum size=0.5cm]
\tikzstyle{function3}=[fill=white, draw=black, minimum size=1.5cm]
\tikzstyle{duplicate}=[inner sep=1pt]
\tikzstyle{object}=[inner sep=2pt]
\tikzstyle{wire}=[-, draw=black, line width=1.000]
\tikzstyle{pre}=[<-, shorten <=1pt, >=stealth, draw=black, line width=1.000]
\tikzstyle{post}=[->, shorten >=1pt, >=stealth, draw=black, line width=1.000]
\tikzset{
  myblock/.style={
    draw,minimum size=1cm,align=center
  },
  twoarrows/.style n args={4}{
    decoration={
      markings,
      mark=at position #1 with {\arrow{>}\node[above] {#3};}, 
      mark=at position #2 with {\arrow{>}\node[above] {#4};} 
    },
  postaction=decorate
  },
  onearrow/.style 2 args={
    decoration={
      markings,
      mark=at position #1 with {\arrow[scale=1,>=stealth]{>}\node[right] {#2};}, 
    },
  postaction=decorate,
    line width=1.000
  },
   zeroarrow/.style 2 args={
    decoration={
      markings,
      mark=at position #1 with {\node[left] {#2};}, 
    },
  postaction=decorate,
  line width=1.000
  },
   mysquare/.style={
    draw,minimum size=0.5cm,align=center},
}
\newtheorem{observation}{Remark}[section]
\newtheorem{theorem}[observation]{Theorem}
\newtheorem{definition}[observation]{Definition}
\newtheorem{example}[observation]{Example}
\newtheorem{proposition}[observation]{Proposition} 
\newtheorem{corollary}[observation]{Corollary}
\title{Why $\mathsf{FHilb}$ is Not an Interesting (Co)Differential Category}
\author{Jean-Simon Pacaud Lemay 
\institute{Department of Computer Science \\University of Oxford \\Oxford, UK}
\email{jean-simon.lemay@kellogg.ox.ac.uk}
\email{https://www.cs.ox.ac.uk/people/jean-simon.lemay/}}
\begin{document}
\allowdisplaybreaks
\maketitle

\begin{abstract}
Differential categories provide an axiomatization of the basics of differentiation and categorical models of differential linear logic. As differentiation is an important tool throughout quantum mechanics and quantum information, it makes sense to study applications of the theory of differential categories to categorical quantum foundations. In categorical quantum foundations, compact closed categories (and therefore traced symmetric monoidal categories) are one of the main objects of study, in particular the category of finite-dimensional Hilbert spaces $\mathsf{FHilb}$. In this paper, we will explain why the only differential category structure on FHilb is the trivial one. This follows from a sort of incompatibility between the trace of $\mathsf{FHilb}$ and possible differential category structure. That said, there are interesting non-trivial examples of traced/compact closed differential categories, which we also discuss. 

The goal of this paper is to introduce differential categories to the broader categorical quantum foundation community and hopefully open the door to further work in combining these two fields. While the main result of this paper may seem somewhat ``negative''  in achieving this goal, we discuss interesting potential applications of differential categories to categorical quantum foundations. 
\end{abstract}

\subparagraph*{Acknowledgements:} The author would like to thank Jamie Vicary and the audience of CT Oktoberfest 2018 for useful discussions, as well as Robert Seely for editorial comments. The author would also like to thank Kellogg College, the Clarendon Fund, and the Oxford-Google Deep Mind Graduate Scholarship for financial support.

\section{Introduction}

Differential categories were introduced by Blute, Cockett, and Seely \cite{blute2006differential} to provide categorical models of differential linear logic ($\mathsf{DILL})$ \cite{ehrhard2017introduction}, which is an extension of the multiplicative and exponential fragments of intuitionistic linear logic ($\mathsf{MELL}$) by adding a differentiation inference rule. The axioms of a differential category encode the basic properties of the derivative from differential calculus such as the product rule (also known as the Leibniz rule) and the chain rule. The dual notion of differential categories, codifferential categories, provides generalizations of derivation operators in commutative algebra \cite{blute2015derivations}. The theory of differential categories now has a rich literature of its own and has led to other abstract formulations of several notions of differentiation such as the directional derivative \cite{blute2009cartesian} and smooth manifolds \cite{cockett2014differential}. As differentiation is an important tool throughout quantum mechanics and quantum information, it makes sense to study applications of the theory of differential categories to categorical quantum foundations (as suggested briefly in the conclusion of \cite{laird2013constructing}). 

There are many models of $\mathsf{MELL}$ with a ``quantum foundation'' flavour in the literature, most of which are related to the bosonic Fock space \cite[Chapter 21]{geroch1985mathematical} such as studied by Fiore in \cite{fiore2015axiomatics}, Vicary in \cite{vicary2008categorical}, and Blute, Panangaden, and Seely in \cite{blute1994fock}. In fact, the models provided in these sources are all differential categories (though not necessarily studied as such), and therefore already provide perfectly good examples of differential categories in categorical quantum foundations. However, arguably the most fundamental objects of study in categorical quantum foundations are compact closed categories (which the models in \cite{vicary2008categorical} and \cite{blute1994fock} are not) and in particular $\mathsf{FHilb}$, the category of finite-dimensional Hilbert spaces and linear maps between them. Compact closed categories are known as ``degenerate models'' of linear logic \cite{hyland2003glueing}, though the term degenerate here is not used in a negative way. Therefore a natural question to ask is: ``Is $\mathsf{FHilb}$ a (co)differential category?'' Contrary to what the title of this paper would suggest, the answer to this question is yes, but with a big asterisk. $\mathsf{FHilb}$ admits a trivial (co)differential category structure (Definition \ref{trivdef}) where all structure is zero.  So we must refine our question to: ``Does $\mathsf{FHilb}$ have a non-trivial (co)differential category structure?'' And the answer to this question is no! 

While readers familiar with differential categories may not be surprised by the fact that $\mathsf{FHilb}$ is not an interesting (co)differential category, the reason as to why this is the case is quite interesting. This phenomenon occurs due to an incompatibility between the compact closed structure, specifically the induced trace on $\mathsf{FHilb}$ and any potential codifferential category structure (Theorem \ref{mainthm}). As pointed out to the author by Vicary in private conversations, the proof of this incompatibility is similar to the proof that the Weyl algebra in two variables over a field of characteristic zero does not have any finite dimensional representations \cite{fenn2007weyl}, where the proof makes use of the standard trace of matrices. Again readers familiar with differential categories may point out that this was to be expected since most examples of (co)differential categories \cite{blute2006differential, cockettlemay2018, ehrhard2017introduction} require some sort of ``infinite dimensionality'', in particular for the exponential of linear logic. However, we will see in Example \ref{diffex}.iii that $\mathsf{FVEC}_{\mathbb{Z}_2}$, the category of finite-dimensional vector spaces over $\mathbb{Z}_2$, is a non-trivial codifferential category (which to the knowledge of the author, is a new observation). Therefore, there are examples of non-trivial compact closed/traced (co)differential categories. Digging deeper, one realizes that the incompatibility in $\mathsf{FHilb}$ follows both from $\mathbb{C}$ having characteristic zero and additive inverses. This also explains why $\mathsf{FVEC}_{\mathbb{Z}_2}$ and $\mathsf{REL}$, the category of sets and relations, are both non-trivial codifferential categories (Example \ref{failex}): $\mathbb{Z}_2$ has characteristic $2$, while $\mathsf{REL}$ does not have additive inverses. 

Another goal of this paper is to provide an introduction to differential categories to the categorical quantum foundation community. Hopefully, readers will become interested in differential categories and open the door to further study towards combining these two fields together. 

\textbf{Summary of Paper:} Section 2 provides a detailed overview of codifferential categories by reviewing the definitions and providing detailed examples of additive symmetric monoidal categories (Definition \ref{ASMC} and Example \ref{addex}), algebra modalities (Definition \ref{algmod} and Example \ref{Symex}), and of course codifferential categories (Definition \ref{diffcatdef} and Example \ref{diffex}). We also discuss trivial codifferential category structure (Definition \ref{trivdef}) and explain why $\mathsf{FHilb}$ is a trivial codifferential category. Section 3 is dedicated to proving that the only codifferential category structure on $\mathsf{FHilb}$ is the trivial one. We review traced symmetric monoidal categories (Definition \ref{tracedef}) and consider an additive version of such categories (Definition \ref{addtdef}). Then we provide the main result of this paper (Theorem \ref{mainthm}) and explain why $\mathsf{FHilb}$ is not an interesting (co)differential category (Corollary \ref{maincor}). We also discuss $\mathsf{REL}$ and $\mathsf{FVEC}_{\mathbb{Z}_2}$ which are examples of non-trivial traced codifferential categories and why the main result does not apply to these examples (Example \ref{failex}). We conclude this paper in Section 4 with a discussion of potential directions and future work, which can be summarized as (i) looking for other examples of (co)differential categories with relations to categorial quantum foundations, (ii) studying the other aspects of the theory of differential categories with applications to categorial quantum foundations, and (iii) studying further traced/compact closed (co)differential categories.    

\textbf{Conventions:} Symmetric monoidal categories will be denoted as triples $(\mathbb{X}, \otimes, K)$ where $\mathbb{X}$ is the category, $\otimes$ is the monoidal product, and $K$ is the monoidal unit. We suppress the associativity, unit, and symmetry isomorphisms since we will be working with the graphical calculus \cite{selinger2010survey} of symmetric monoidal categories to express algebraic expressions. String diagrams are to be read left to right. 

\section{Codifferential Categories}

In this section we review \textbf{codifferential categories}, the dual notion of differential categories \cite{blute2006differential}. We have chosen to work with codifferential categories in this paper since the axioms and examples of codifferential categories may be more intuitive for those new to the subject. 

Two of the most important and basic identities from classical differential calculus require addition and zero: the Leibniz rule, $(f(x)g(x))^\prime = f^\prime(x) g(x) + f(x)g^\prime(x)$, and the constant rule, $c^\prime = 0$. Therefore we must discuss the notion of a category having additive structure. Following the terminology used in the differential category literature, here we mean ``additive'' in the sense of Blute, Cockett, and Seely \cite{blute2006differential}, that is, to be enriched over commutative monoids. In particular, this definition does not assume negatives (in the sense of having additive inverses) nor does it assume biproducts, which differs from other sources such as in \cite{mac2013categories}. 

\begin{definition}\label{ASMC} An \textbf{additive category} \cite{blute2006differential} is a category $\mathbb{X}$ such that each hom-set $\mathbb{X}(A,B)$ is a commutative monoid (in the classical sense) with binary operation $+: \mathbb{X}(A,B) \times \mathbb{X}(A,B) \to \mathbb{X}(A,B)$ and unit ${0: A \to B}$, and furthermore that composition preserves the additive structure in the sense that:
\[   \begin{array}[c]{c}\resizebox{!}{1.5cm}{%
\begin{tikzpicture}
	\begin{pgfonlayer}{nodelayer}
		\node [style=function2] (1) at (-9.25, 7) {$f$};
		\node [style=none] (2) at (-10, 7) {};
		\node [style=none] (3) at (-6, 7) {};
		\node [style=function1] (4) at (-8, 7) {$g+h$};
		\node [style=object] (6) at (-5.5, 7) {$=$};
		\node [style=function2] (29) at (-6.75, 7) {$k$};
		\node [style=function2] (30) at (-4.25, 7) {$f$};
		\node [style=none] (31) at (-5, 7) {};
		\node [style=none] (32) at (-1.5, 7) {};
		\node [style=function2] (33) at (-3.25, 7) {$g$};
		\node [style=function2] (34) at (-2.25, 7) {$k$};
		\node [style=object] (35) at (-1, 7) {$+$};
		\node [style=function2] (36) at (0.25, 7) {$f$};
		\node [style=none] (37) at (-0.5, 7) {};
		\node [style=none] (38) at (3, 7) {};
		\node [style=function2] (39) at (1.25, 7) {$h$};
		\node [style=function2] (40) at (2.25, 7) {$k$};
		\node [style=function2] (41) at (-6.75, 6) {$f$};
		\node [style=none] (42) at (-7.5, 6) {};
		\node [style=none] (43) at (-5, 6) {};
		\node [style=object] (45) at (-4.5, 6) {$=$};
		\node [style=function2] (46) at (-5.75, 6) {$0$};
		\node [style=none] (48) at (-4, 6) {};
		\node [style=none] (49) at (-2.25, 6) {};
		\node [style=function2] (50) at (-3.25, 6) {$0$};
		\node [style=object] (52) at (-1.75, 6) {$=$};
		\node [style=function2] (53) at (0.5, 6) {$f$};
		\node [style=none] (54) at (1.25, 6) {};
		\node [style=none] (55) at (-1.25, 6) {};
		\node [style=function2] (56) at (-0.5, 6) {$0$};
		\node [style=none] (57) at (-3, 6) {};
	\end{pgfonlayer}
	\begin{pgfonlayer}{edgelayer}
		\draw [style=wire] (2.center) to (1);
		\draw [style=wire] (1) to (4);
		\draw [style=wire] (4) to (29);
		\draw [style=post] (29) to (3.center);
		\draw [style=wire] (31.center) to (30);
		\draw [style=wire] (30) to (33);
		\draw [style=wire] (33) to (34);
		\draw [style=post] (34) to (32.center);
		\draw [style=wire] (37.center) to (36);
		\draw [style=wire] (36) to (39);
		\draw [style=wire] (39) to (40);
		\draw [style=post] (40) to (38.center);
		\draw [style=wire] (42.center) to (41);
		\draw [style=post] (46) to (43.center);
		\draw [style=wire] (41) to (46);
		\draw [style=pre] (54.center) to (53);
		\draw [style=wire] (56) to (55.center);
		\draw [style=wire] (53) to (56);
		\draw [style=wire] (48.center) to (50);
		\draw [style=post] (50) to (49.center);
	\end{pgfonlayer}
\end{tikzpicture}
  }%
\end{array}
\]
An \textbf{additive symmetric monoidal category} \cite{blute2006differential} is a symmetric monoidal category $(\mathbb{X}, \otimes, K)$ such that $\mathbb{X}$ is an additive category and the monoidal product $\otimes$ is compatible with the additive structure:
\[   \begin{array}[c]{c}\resizebox{!}{4.5cm}{%
\begin{tikzpicture}
	\begin{pgfonlayer}{nodelayer}
		\node [style=object] (6) at (-5.5, 7.5) {$=$};
		\node [style=object] (35) at (-1.5, 7.5) {$+$};
		\node [style=none] (58) at (-9, 6.5) {};
		\node [style=none] (59) at (-6, 6.5) {};
		\node [style=function2] (60) at (-7.5, 6.5) {$k$};
		\node [style=none] (61) at (-8, 6.5) {};
		\node [style=none] (62) at (-9, 7.5) {};
		\node [style=none] (63) at (-6, 7.5) {};
		\node [style=function1] (64) at (-7.5, 7.5) {$g+h$};
		\node [style=none] (65) at (-8, 7.5) {};
		\node [style=none] (66) at (-9, 8.5) {};
		\node [style=none] (67) at (-6, 8.5) {};
		\node [style=function2] (68) at (-7.5, 8.5) {$f$};
		\node [style=none] (69) at (-8, 8.5) {};
		\node [style=none] (70) at (-5, 6.75) {};
		\node [style=none] (71) at (-2, 6.75) {};
		\node [style=function2] (72) at (-3.5, 6.75) {$k$};
		\node [style=none] (73) at (-4, 6.75) {};
		\node [style=none] (74) at (-5, 7.5) {};
		\node [style=none] (75) at (-2, 7.5) {};
		\node [style=function2] (76) at (-3.5, 7.5) {$g$};
		\node [style=none] (77) at (-4, 7.5) {};
		\node [style=none] (78) at (-5, 8.25) {};
		\node [style=none] (79) at (-2, 8.25) {};
		\node [style=function2] (80) at (-3.5, 8.25) {$f$};
		\node [style=none] (81) at (-4, 8.25) {};
		\node [style=none] (82) at (-1, 6.75) {};
		\node [style=none] (83) at (2, 6.75) {};
		\node [style=function2] (84) at (0.5, 6.75) {$k$};
		\node [style=none] (85) at (0, 6.75) {};
		\node [style=none] (86) at (-1, 7.5) {};
		\node [style=none] (87) at (2, 7.5) {};
		\node [style=function2] (88) at (0.5, 7.5) {$h$};
		\node [style=none] (89) at (0, 7.5) {};
		\node [style=none] (90) at (-1, 8.25) {};
		\node [style=none] (91) at (2, 8.25) {};
		\node [style=function2] (92) at (0.5, 8.25) {$f$};
		\node [style=none] (93) at (0, 8.25) {};
		\node [style=object] (94) at (-4, 5) {$=$};
		\node [style=none] (95) at (-7.5, 4.25) {};
		\node [style=none] (96) at (-4.5, 4.25) {};
		\node [style=function2] (97) at (-6, 4.25) {$k$};
		\node [style=none] (98) at (-6.5, 4.25) {};
		\node [style=none] (99) at (-7.5, 5) {};
		\node [style=none] (100) at (-4.5, 5) {};
		\node [style=function2] (101) at (-6, 5) {$0$};
		\node [style=none] (102) at (-6.5, 5) {};
		\node [style=none] (103) at (-7.5, 5.75) {};
		\node [style=none] (104) at (-4.5, 5.75) {};
		\node [style=function2] (105) at (-6, 5.75) {$f$};
		\node [style=none] (106) at (-6.5, 5.75) {};
		\node [style=none] (107) at (-3.5, 4.5) {};
		\node [style=none] (108) at (-0.5, 4.5) {};
		\node [style=none] (111) at (-3.5, 5) {};
		\node [style=none] (112) at (-0.5, 5) {};
		\node [style=none] (115) at (-3.5, 5.5) {};
		\node [style=none] (116) at (-0.5, 5.5) {};
		\node [style=none] (117) at (-2.75, 5.5) {};
		\node [style=none] (118) at (-2.75, 5) {};
		\node [style=none] (119) at (-2.75, 4.5) {};
		\node [style=none] (120) at (-1.25, 5.5) {};
		\node [style=none] (121) at (-1.25, 5) {};
		\node [style=none] (122) at (-1.25, 4.5) {};
		\node [style=function3] (123) at (-2, 5) {$0$};
	\end{pgfonlayer}
	\begin{pgfonlayer}{edgelayer}
		\draw [style=wire] (58.center) to (60);
		\draw [style=post] (60) to (59.center);
		\draw [style=wire] (62.center) to (64);
		\draw [style=post] (64) to (63.center);
		\draw [style=wire] (66.center) to (68);
		\draw [style=post] (68) to (67.center);
		\draw [style=wire] (70.center) to (72);
		\draw [style=post] (72) to (71.center);
		\draw [style=wire] (74.center) to (76);
		\draw [style=post] (76) to (75.center);
		\draw [style=wire] (78.center) to (80);
		\draw [style=post] (80) to (79.center);
		\draw [style=wire] (82.center) to (84);
		\draw [style=post] (84) to (83.center);
		\draw [style=wire] (86.center) to (88);
		\draw [style=post] (88) to (87.center);
		\draw [style=wire] (90.center) to (92);
		\draw [style=post] (92) to (91.center);
		\draw [style=wire] (95.center) to (97);
		\draw [style=post] (97) to (96.center);
		\draw [style=wire] (99.center) to (101);
		\draw [style=post] (101) to (100.center);
		\draw [style=wire] (103.center) to (105);
		\draw [style=post] (105) to (104.center);
		\draw [style=wire] (115.center) to (117.center);
		\draw [style=wire] (111.center) to (118.center);
		\draw [style=wire] (107.center) to (119.center);
		\draw [style=post] (120.center) to (116.center);
		\draw [style=post] (121.center) to (112.center);
		\draw [style=post] (122.center) to (108.center);
	\end{pgfonlayer}
\end{tikzpicture}
  }%
\end{array}
\]
\end{definition}

Note that in the above string diagrams, $+$ is an external operator. For example, the first string diagram equality is the equality $k \circ (g+h) \circ f = (k \circ g \circ f) + (k \circ h \circ f)$. 

\begin{example}\label{addex} \normalfont Here are some examples of well known additive symmetric monoidal categories. We note that in these examples, the additive structures are induced from finite biproducts (see \cite{mac2013categories} for how biproducts induce an additive category structure).
\begin{enumerate}[{\em (i)}]
\item Let $\mathsf{REL}$ be the category of sets and relations. Then $\mathsf{REL}$ is an additive symmetric monoidal category where the monoidal product is given by the Cartesian product of sets and the monoidal unit is a chosen singleton $\lbrace \ast \rbrace$, and where the sum of relations $R,S \subseteq X \times Y$ is defined as their union $R + S := R \cup S \subseteq X \times Y$, and the zero maps are the empty relations $0 := \emptyset \subseteq X \times Y$. 
\item Let $\mathbb{K}$ be a field and let $\mathsf{VEC}_\mathbb{K}$ to be the category of all $\mathbb{K}$-vector spaces and $\mathbb{K}$-linear maps between them. Then $\mathsf{VEC}_\mathbb{K}$ is an additive symmetric monoidal category where the monoidal product is given by the algebraic tensor product of vector spaces and the monoidal unit is $\mathbb{K}$, while the sum of $\mathbb{K}$-linear maps $f, g: V \to W$ is the standard pointwise sum of linear maps, $(f+g)(v) := f(v) + g(v)$, and where the zero maps $0: V \to W$ are the $\mathbb{K}$-linear maps which map everything to zero.
\item Similarly, let $\mathsf{FVEC}_\mathbb{K}$ be the category of all \emph{finite dimensional} $\mathbb{K}$-vector spaces and $\mathbb{K}$-linear maps between them. Then $\mathsf{FVEC}_\mathbb{K}$ is an additive symmetric monoidal category with the same structure defined for $\mathsf{VEC}_\mathbb{K}$.  Perhaps of particular interest to the categorical quantum foundations community is when $\mathbb{K} = \mathbb{C}$, the complex numbers, then famously $\mathsf{FVEC}_\mathbb{C}$ is equivalent to $\mathsf{FHilb}$, the category of finite dimensional Hilbert spaces and $\mathbb{C}$ linear maps between. Therefore $\mathsf{FHilb}$ is also an additive symmetric monoidal category. 
 \end{enumerate}
\end{example} 

The next key ingredient of codifferential category structure is the algebra modality, which is a monad $\mathsf{S}$ whose free $\mathsf{S}$-algebras come equipped with a natural commutative monoid structure. 

\begin{definition}\label{algmod} An \textbf{algebra modality} \cite{blute2006differential, blute2015derivations} on a symmetric monoidal category $(\mathbb{X}, \otimes, K)$ is a quintuple $(\mathsf{S}, \mu, \eta, \mathsf{m}, \mathsf{u})$ consisting of an endofunctor $\mathsf{S}: \mathbb{X} \to \mathbb{X}$ and four natural transformations: 
\[   \begin{array}[c]{c}\resizebox{!}{0.85cm}{%
\begin{tikzpicture}
	\begin{pgfonlayer}{nodelayer}
		\node [style=object] (0) at (3.75, 4) {$\mathsf{S}A$};
		\node [style=object] (1) at (0.75, 4.25) {$\mathsf{S}A$};
		\node [style=object] (2) at (0.75, 3.75) {$\mathsf{S}A$};
		\node [style=function1] (3) at (2.25, 4) {$\mathsf{m}$};
		\node [style=object] (4) at (7, 4) {$\mathsf{S}A$};
		\node [style=function2] (5) at (5.25, 4) {$\mathsf{u}$};
		\node [style=object] (6) at (10.5, 4) {$\mathsf{S}A$};
		\node [style=function2] (7) at (9.25, 4) {$\eta$};
		\node [style=object] (9) at (8.25, 4) {$A$};
		\node [style=object] (10) at (15.5, 4) {$\mathsf{S}A$};
		\node [style=function2] (11) at (14, 4) {$\mu$};
		\node [style=object] (13) at (12.5, 4) {$\mathsf{S}\mathsf{S} A$};
		\node [style=none] (14) at (1.75, 4.25) {};
		\node [style=none] (15) at (1.75, 3.75) {};
	\end{pgfonlayer}
	\begin{pgfonlayer}{edgelayer}
		\draw [style=post, in=180, out=0] (3) to (0);
		\draw [style=post, in=180, out=0] (5) to (4);
		\draw [style=post, in=180, out=0] (7) to (6);
		\draw [style=wire] (9) to (7);
		\draw [style=post, in=180, out=0] (11) to (10);
		\draw [style=wire] (13) to (11);
		\draw [style=wire] (1) to (14.center);
		\draw [style=wire] (2) to (15.center);
	\end{pgfonlayer}
\end{tikzpicture}
  }%
\end{array}
\]
such that:

\noindent \textbf{[AM.1]} $(\mathsf{S}, \mu, \eta)$ is a monad, that is, the following equalities hold: 
\[   \begin{array}[c]{c}\resizebox{!}{1.5cm}{%
\begin{tikzpicture}
	\begin{pgfonlayer}{nodelayer}
		\node [style=function2] (1) at (-8, 7) {$\eta$};
		\node [style=none] (2) at (-9, 7) {};
		\node [style=none] (3) at (-5.5, 7) {};
		\node [style=function2] (4) at (-6.75, 7) {$\mu$};
		\node [style=object] (6) at (-4.75, 7) {$=$};
		\node [style=none] (8) at (-4, 7) {};
		\node [style=none] (9) at (-0.75, 7) {};
		\node [style=object] (15) at (0, 7) {$=$};
		\node [style=function2] (16) at (2, 7) {$\mathsf{S}(\eta)$};
		\node [style=none] (17) at (0.75, 7) {};
		\node [style=none] (18) at (4.5, 7) {};
		\node [style=function2] (19) at (3.5, 7) {$\mu$};
		\node [style=function2] (20) at (-5.25, 5.75) {$\mu$};
		\node [style=none] (21) at (-6.25, 5.75) {};
		\node [style=none] (22) at (-2.75, 5.75) {};
		\node [style=function2] (23) at (-4, 5.75) {$\mu$};
		\node [style=object] (24) at (-2, 5.75) {$=$};
		\node [style=function2] (25) at (0, 5.75) {$\mathsf{S}(\mu)$};
		\node [style=none] (26) at (-1.25, 5.75) {};
		\node [style=none] (27) at (2.75, 5.75) {};
		\node [style=function2] (28) at (1.5, 5.75) {$\mu$};
	\end{pgfonlayer}
	\begin{pgfonlayer}{edgelayer}
		\draw [style=wire] (2.center) to (1);
		\draw [style=post, in=180, out=0] (4) to (3.center);
		\draw [style=wire] (1) to (4);
		\draw [style=post] (8.center) to (9.center);
		\draw [style=wire] (17.center) to (16);
		\draw [style=post, in=180, out=0] (19) to (18.center);
		\draw [style=wire] (16) to (19);
		\draw [style=wire] (21.center) to (20);
		\draw [style=post, in=180, out=0] (23) to (22.center);
		\draw [style=wire] (20) to (23);
		\draw [style=wire] (26.center) to (25);
		\draw [style=post, in=180, out=0] (28) to (27.center);
		\draw [style=wire] (25) to (28);
	\end{pgfonlayer}
\end{tikzpicture}
  }%
\end{array}
\]
\noindent \textbf{[AM.2]} For all objects $A$, $(\mathsf{S}A, \mathsf{m}, \mathsf{u})$ is a commutative monoid in $(\mathbb{X}, \otimes, K)$, that is, the following equalities hold: 
\[   \begin{array}[c]{c}\resizebox{!}{2.75cm}{%
\begin{tikzpicture}
	\begin{pgfonlayer}{nodelayer}
		\node [style=none] (0) at (2.25, 4) {};
		\node [style=none] (1) at (-1.5, 4.25) {};
		\node [style=none] (2) at (-0.5, 3.25) {};
		\node [style=function1] (3) at (1.25, 4) {$\mathsf{m}$};
		\node [style=none] (14) at (0.75, 4.25) {};
		\node [style=none] (15) at (0.75, 3.75) {};
		\node [style=function1] (16) at (-0.5, 3.25) {$\mathsf{m}$};
		\node [style=none] (17) at (-1.5, 3.5) {};
		\node [style=none] (18) at (-1.5, 3) {};
		\node [style=none] (19) at (-1, 3.5) {};
		\node [style=none] (20) at (-1, 3) {};
		\node [style=none] (21) at (0.25, 3.25) {};
		\node [style=none] (22) at (6.75, 3.25) {};
		\node [style=none] (23) at (3, 3) {};
		\node [style=none] (24) at (4, 4) {};
		\node [style=function1] (25) at (5.75, 3.25) {$\mathsf{m}$};
		\node [style=none] (26) at (5.25, 3) {};
		\node [style=none] (27) at (5.25, 3.5) {};
		\node [style=function1] (28) at (4, 4) {$\mathsf{m}$};
		\node [style=none] (29) at (3, 3.75) {};
		\node [style=none] (30) at (3, 4.25) {};
		\node [style=none] (31) at (3.5, 3.75) {};
		\node [style=none] (32) at (3.5, 4.25) {};
		\node [style=none] (33) at (4.75, 4) {};
		\node [style=object] (34) at (2.5, 3.5) {$=$};
		\node [style=none] (35) at (9, 3.5) {};
		\node [style=function1] (36) at (9, 3.5) {$\mathsf{m}$};
		\node [style=none] (37) at (8, 3.25) {};
		\node [style=none] (38) at (8, 3.75) {};
		\node [style=none] (39) at (8.5, 3.25) {};
		\node [style=none] (40) at (8.5, 3.75) {};
		\node [style=none] (41) at (10, 3.5) {};
		\node [style=object] (42) at (10.5, 3.5) {$=$};
		\node [style=none] (43) at (12.5, 3.5) {};
		\node [style=function1] (44) at (12.5, 3.5) {$\mathsf{m}$};
		\node [style=none] (45) at (11.75, 3.25) {};
		\node [style=none] (46) at (11.75, 3.75) {};
		\node [style=none] (47) at (12, 3.25) {};
		\node [style=none] (48) at (12, 3.75) {};
		\node [style=none] (49) at (13.5, 3.5) {};
		\node [style=none] (50) at (11, 3.25) {};
		\node [style=none] (51) at (11, 3.75) {};
		\node [style=none] (52) at (3.75, 1.75) {};
		\node [style=function1] (53) at (3.75, 1.75) {$\mathsf{m}$};
		\node [style=none] (54) at (2.5, 1.5) {};
		\node [style=function2] (55) at (2.5, 2) {$\mathsf{u}$};
		\node [style=none] (56) at (3.25, 1.5) {};
		\node [style=none] (57) at (3.25, 2) {};
		\node [style=none] (58) at (4.75, 1.75) {};
		\node [style=object] (59) at (5.25, 1.75) {$=$};
		\node [style=none] (60) at (5.75, 1.75) {};
		\node [style=none] (61) at (7.5, 1.75) {};
		\node [style=object] (62) at (8, 1.75) {$=$};
		\node [style=none] (63) at (9.75, 1.75) {};
		\node [style=function1] (64) at (9.75, 1.75) {$\mathsf{m}$};
		\node [style=none] (65) at (8.5, 2) {};
		\node [style=function2] (66) at (8.5, 1.5) {$\mathsf{u}$};
		\node [style=none] (67) at (9.25, 2) {};
		\node [style=none] (68) at (9.25, 1.5) {};
		\node [style=none] (69) at (10.75, 1.75) {};
	\end{pgfonlayer}
	\begin{pgfonlayer}{edgelayer}
		\draw [style=post, in=180, out=0] (3) to (0.center);
		\draw [style=wire] (17.center) to (19.center);
		\draw [style=wire] (18.center) to (20.center);
		\draw [style=wire] (1.center) to (14.center);
		\draw [style=wire] (16) to (21.center);
		\draw [style=wire] (21.center) to (15.center);
		\draw [style=post, in=-180, out=0] (25) to (22.center);
		\draw [style=wire] (29.center) to (31.center);
		\draw [style=wire] (30.center) to (32.center);
		\draw [style=wire] (23.center) to (26.center);
		\draw [style=wire] (28) to (33.center);
		\draw [style=wire] (33.center) to (27.center);
		\draw [style=wire] (37.center) to (39.center);
		\draw [style=wire] (38.center) to (40.center);
		\draw [style=post] (36) to (41.center);
		\draw [style=wire] (45.center) to (47.center);
		\draw [style=wire] (46.center) to (48.center);
		\draw [style=post] (44) to (49.center);
		\draw [style=wire] (51.center) to (45.center);
		\draw [style=wire] (50.center) to (46.center);
		\draw [style=wire] (54.center) to (56.center);
		\draw [style=wire] (55) to (57.center);
		\draw [style=post] (53) to (58.center);
		\draw [style=post] (60.center) to (61.center);
		\draw [style=wire] (65.center) to (67.center);
		\draw [style=wire] (66) to (68.center);
		\draw [style=post] (64) to (69.center);
	\end{pgfonlayer}
\end{tikzpicture}
  }%
\end{array}
\]
\noindent \textbf{[AM.3]} $\mu$ is a monoid morphism $(\mathbb{X}, \otimes, K)$, that is, the following equalities hold: 
\[    \begin{array}[c]{c}\resizebox{!}{1.25cm}{%
\begin{tikzpicture}
	\begin{pgfonlayer}{nodelayer}
		\node [style=none] (52) at (3.75, 1.75) {};
		\node [style=function1] (53) at (3.75, 1.75) {$\mathsf{m}$};
		\node [style=function2] (54) at (2.25, 1.25) {$\mu$};
		\node [style=function2] (55) at (2.25, 2.25) {$\mu$};
		\node [style=none] (56) at (3.25, 1.5) {};
		\node [style=none] (57) at (3.25, 2) {};
		\node [style=none] (58) at (4.75, 1.75) {};
		\node [style=object] (59) at (5.25, 1.75) {$=$};
		\node [style=none] (63) at (7, 1.75) {};
		\node [style=function1] (64) at (7, 1.75) {$\mathsf{m}$};
		\node [style=none] (65) at (5.75, 2) {};
		\node [style=none] (66) at (5.75, 1.5) {};
		\node [style=none] (67) at (6.5, 2) {};
		\node [style=none] (68) at (6.5, 1.5) {};
		\node [style=function2] (69) at (8, 1.75) {$\mu$};
		\node [style=none] (70) at (1.75, 2.25) {};
		\node [style=none] (71) at (1.75, 1.25) {};
		\node [style=none] (72) at (8.75, 1.75) {};
		\node [style=function2] (73) at (11, 1.75) {$\mu$};
		\node [style=none] (74) at (11.75, 1.75) {};
		\node [style=function2] (75) at (10.25, 1.75) {$\mathsf{u}$};
		\node [style=object] (76) at (12, 1.75) {$=$};
		\node [style=function2] (77) at (12.5, 1.75) {$\mathsf{u}$};
		\node [style=none] (78) at (13.25, 1.75) {};
		\node [style=none] (79) at (2.75, 2.25) {};
		\node [style=none] (80) at (2.75, 1.25) {};
	\end{pgfonlayer}
	\begin{pgfonlayer}{edgelayer}
		\draw [style=post] (53) to (58.center);
		\draw [style=wire] (65.center) to (67.center);
		\draw [style=wire] (66.center) to (68.center);
		\draw [style=wire] (64) to (69);
		\draw [style=wire] (70.center) to (55);
		\draw [style=wire] (71.center) to (54);
		\draw [style=post] (69) to (72.center);
		\draw [style=post] (73) to (74.center);
		\draw [style=wire, in=180, out=0] (75) to (73);
		\draw [style=post] (77) to (78.center);
		\draw [style=wire] (55) to (79.center);
		\draw [style=wire] (54) to (80.center);
		\draw [style=wire] (80.center) to (56.center);
		\draw [style=wire] (79.center) to (57.center);
	\end{pgfonlayer}
\end{tikzpicture}
  }%
\end{array}
\]
\end{definition}

Note that, as explained in \cite{blute2006differential}, algebra modalities are not required to be lax comonoidal monads, as is usually the case in (dual of) models of linear logic \cite{ehrhard2017introduction}.  

\begin{example}\label{Symex} \normalfont Here are some examples of algebra modalities for the examples defined in Example \ref{addex}. Many other examples of algebra modalities can be found in \cite{blute2006differential, cockettlemay2018, ehrhard2017introduction}. 
\begin{enumerate}[{\em (i)}]
\item Free commutative monoids over sets (in the classical sense) induce an algebra modality on $\mathsf{REL}$. Recall that the free commutative monoid over a set $X$ is the set $\mathsf{M}X$ of all finite multisets (also known as finite bags) of elements of $X$. Explicitly, a finite multiset of elements of $X$ is a function $f: X \to \mathbb{N}$ such that the set $\mathsf{supp}(f) := \lbrace x \in X \vert~ f(x) \neq 0 \rbrace$ is finite, and so $\mathsf{M}X= \lbrace f: X \to \mathbb{N} \vert ~ \vert \mathsf{supp}(f) \vert < \infty \rbrace$. The monoid structure on $\mathsf{M}X$ is defined by point-wise addition, $(f+g)(x)=f(x) + g(x)$, while the unit is the empty multiset $\mathsf{u}: X \to \mathbb{N}$ which maps everything to zero, $\mathsf{u}(x) = 0$. Furthermore, for each $x \in X$ there is the Kronecker delta finite multiset $\delta_x: X \to \mathbb{N}$ which maps $y$ to $1$ if $x=y$ and $0$ otherwise. Define the endofunctor $\oc: \mathsf{REL} \to \mathsf{REL}$ as follows: for a set $X$, let $\oc X := \mathsf{M}X$, while for relation $R \subseteq X \times Y$, define $\oc R \subseteq \oc X \times \oc Y$ as ${\oc R := \lbrace (f, g) \rbrace \vert~ \forall x \in \mathsf{supp}(f) \exists\oc y \in \mathsf{supp}(g).~ (f(x),g(y)) \in R \rbrace}$. Now define the following natural transformations: 
\vspace{-5pt}
\begin{align*}
&\eta= \lbrace (x, \delta_x) \vert~ x \in X \rbrace \subseteq X \times \oc X &&& \mu = \left \lbrace (F, \sum \limits_{f \in \mathsf{supp}(F)} f) \vert~ F \in \mathsf{M}\mathsf{M}X  \right \rbrace \subseteq \oc\oc X \times \oc X \\
&\mathsf{u} = \lbrace (\ast, \mathsf{u}) \rbrace \subseteq \lbrace \ast \rbrace \times \oc X &&& \mathsf{m} = \lbrace \left( (f,g), f+g \right) \vert ~ f,g \in \oc X \rbrace \subseteq (\oc X \times \oc X) \times \oc X
\end{align*}
Note that $\mu$ is well-defined since $\mathsf{supp}(F)$ is finite. Then $(\oc, \mu, \eta, \mathsf{m}, \mathsf{u})$ is an algebra modality on $\mathsf{REL}$, and in fact $\oc X$ is the free commutative monoid over $X$ in $\mathsf{REL}$. Alternatively, a finite multiset $f: X \to \mathbb{N}$ can be seen as a monomial in variables $\mathsf{supp}(f) = \lbrace x_1, \hdots, x_n \rbrace$, specifically $x^{f(x_1)}_1x^{f(x_2)}_2 \hdots x^{f(x_n)}_n$. Therefore, $\mu$ and $\eta$ correspond to composition of monomials, while $\mathsf{m}$ and $\mathsf{u}$ correspond to monomial multiplication. For more details, see \cite[Proposition 2.7]{blute2006differential}. 
\item Symmetric algebras \cite[Section 8, Chapter XVI]{lang2002algebra} induce an algebra modality on categories of vector spaces. From a mathematical physics perspective, symmetric algebras correspond to the bosonic Fock space \cite[Chapter 21]{geroch1985mathematical}. So let $\mathbb{K}$ be a field. For a $\mathbb{K}$-vector space $V$, let $S_n(V)$ be the subspace of $V^{\otimes^n}$ generated by the tensor symmetries $v_1 \otimes ... \otimes v_n - v_{\sigma(1)} \otimes ... \otimes v_{\sigma(n)}$, for all $v_i \in V$ and all $n$-permutations $\sigma$. Define the $n$-th symmetric tensor power of $V$ as $\mathsf{Sym}_n(V) := V^{\otimes^n} / S_n(V)$ and let $v_1 \otimes_s \hdots \otimes_s v_n$ be the equivalence class of $v_1 \otimes \hdots \otimes v_n$ in $\mathsf{Sym}_n(V)$, which we refer as pure symmetric tensors. Now define the endofunctor $\mathsf{Sym}: \mathsf{VEC}_{\mathbb{K}} \to \mathsf{VEC}_{\mathbb{K}}$ which maps $V$ to its symmetric algebra $\mathsf{Sym}(V)$ defined as: 
\vspace{-5pt}
\[\mathsf{Sym}(V)= \bigoplus^{\infty}_{n=0} \mathsf{Sym}_n(V)= \mathbb{K} \oplus V \oplus \mathsf{Sym}_2(V) \oplus ...\]
while for a linear transformation $f: V \to W$, $\mathsf{Sym}(f): \mathsf{Sym}(V) \to \mathsf{Sym}(W)$ is defined on pure tensors as follows $\mathsf{Sym}(f)(v_1 \otimes_s \hdots \otimes_s v_n) = f(v_1) \otimes_s \hdots \otimes_s f(v_n)$ which we extend by linearity. Now define the following natural transformations: 
\vspace{-5pt}
\begin{align*}
\eta: V &\to \mathsf{Sym}(V) & \mathsf{u}: \mathbb{K} &\to \mathsf{Sym}(V)   \\
v &\mapsto v & 1 &\mapsto 1 
\end{align*}
\begin{align*}
 \mu: \mathsf{Sym}^2(V) &\to \mathsf{Sym}(V) \\
 (v_1 \otimes_s \hdots \otimes_s v_n) \otimes_s \hdots \otimes_s (w_1 \otimes_s \hdots \otimes_s w_m) &\mapsto v_1 \otimes_s \hdots \otimes_s v_n \otimes_s \hdots \otimes_s w_1 \otimes_s \hdots \otimes_s w_m \end{align*}
 \begin{align*}
 \mathsf{m}: \mathsf{Sym}(V) \otimes \mathsf{Sym}(V) &\to \mathsf{Sym}(V) \\
(v_1 \otimes_s \hdots \otimes_s v_n) \otimes (w_1 \otimes_s \hdots \otimes_s w_m) &\mapsto v_1 \otimes_s \hdots \otimes_s v_n \otimes_s w_1 \otimes_s \hdots \otimes_s w_m
\end{align*}
which we extend by linearity. Then $(\mathsf{Sym}, \mu, \eta, \mathsf{m}, \mathsf{u})$ is an algebra modality on $\mathsf{VEC}_{\mathbb{K}}$, and as in the previous example, $\mathsf{Sym}(V)$ is the free commutative $\mathbb{K}$-algebra over $V$. In particular, if $X$ is a basis of $V$, then $\mathsf{Sym}(V) \cong \mathbb{K}[X]$ as $\mathbb{K}$-algebras (where $\mathbb{K}[X]$ is the polynomial ring over $X$). Therefore, $\mu$ and $\eta$ correspond to polynomial composition, while $\mathsf{m}$ and $\mathsf{u}$ correspond to polynomial multiplication. For more details on this algebra modality, see \cite[Proposition 2.9]{blute2006differential}. 
\item Unlike symmetric algebras, exterior algebras \cite[Section 8, Chapter XIX ]{lang2002algebra} only induce an algebra modality in a particular case. From a mathematical physics perspective, exterior algebras correspond to the fermionic Fock space \cite[Chapter 21]{geroch1985mathematical}. Let $\mathbb{K}$ be a field. For a $\mathbb{K}$-vector space $V$, let $E_n(V)$ be the subspace of $V^{\otimes^n}$ generated by the alternating tensor symmetries $v_1 \otimes ... \otimes v_n - \mathsf{sign}(\sigma)(v_{\sigma(1)} \otimes ... \otimes v_{\sigma(n)})$ for all $v_i \in V$ and all $n$-permutations $\sigma$, and where $\mathsf{sign}(\sigma)$ is the sign of the permutation. Define the $n$-th exterior power of $V$ as $\mathsf{Ext}_n(V) := V^{\otimes^n} / E_n(V)$ and let $v_1 \wedge \hdots \wedge v_n$ be the equivalence class of $v_1 \otimes \hdots \otimes v_n$ in $\mathsf{Ext}_n(V)$, which we refer to as pure wedge products. Note that if $V$ is finite dimensional, then for all $n > \mathsf{dim}(V)$, $\mathsf{Ext}_n (V) = 0$ since in particular $v \wedge v = 0$. Therefore we can define an endofunctor $\mathsf{Ext}: \mathsf{FVEC}_{\mathbb{K}} \to \mathsf{FVEC}_{\mathbb{K}}$ which maps a finite dimensional vector space $V$ to its exterior algebra $\mathsf{Ext}(V)$ defined as follows: 
\[ \mathsf{Ext}(V) :=  \bigoplus^{\mathsf{dim}(V)}_{n=0} \mathsf{Ext}_n(V)= \mathbb{K} \oplus V \oplus \mathsf{Ext}_2(V) \oplus \hdots \oplus \mathsf{Ext}_{\mathsf{dim}(V)}(V) \]
and for a linear transformation $f: V \to W$, $\mathsf{Ext}(f): \mathsf{Ext}(V) \to \mathsf{Ext}(W)$ is defined on pure wedge products as follows: $\mathsf{Ext}(f)(v_1 \wedge \hdots \wedge v_n) = f(v_1) \wedge \hdots \wedge f(v_n)$, 
which we then extend by linearity. We can also define the following natural transformations: 
\begin{align*}
\eta: V &\to \mathsf{Ext}(V) &&& \mathsf{u}: \mathbb{K} &\to \mathsf{Ext}(V)  \\
v &\mapsto v &&& 1 &\mapsto 1 \end{align*}
\begin{align*}
\mathsf{m}: \mathsf{Ext}(V) \otimes \mathsf{Ext}(V) &\to \mathsf{Ext}(V) \\
(v_1 \wedge \hdots \wedge v_n) \otimes (w_1 \wedge \hdots \wedge w_m) &\mapsto v_1 \wedge \hdots \wedge v_n \wedge w_1 \wedge \hdots \wedge w_m  
\end{align*}
However, there are two problems with $\mathsf{Ext}$ being an algebra modality. The first is that $(\mathsf{Ext}(V), \mathsf{m}, \mathsf{u})$ is not a commutative $\mathbb{K}$-algebra but an \emph{anticommutative} $\mathbb{K}$-algebra since $v \wedge w = -w \wedge v$. The second is that due to this anticommutativity, it is not possible (in general) to construct a well defined $\mu: \mathsf{Ext}^2(V) \to \mathsf{Ext}(V)$ with the desired properties. Both of these problems are solved when $\mathbb{K} = \mathbb{Z}_2$, the field of integers modulo $2$, since in this case $1=-1$ and therefore $v \wedge w = w \wedge v$. So now we can define the following natural transformation:
\begin{align*}
\mu: \mathsf{Ext}^2(V) &\to \mathsf{Ext}(V)   \\
(v_1 \wedge \hdots \wedge v_n) \wedge \hdots \wedge (w_1 \wedge \hdots \wedge w_m) &\mapsto v_1 \wedge \hdots \wedge v_n \wedge \hdots \wedge w_1 \wedge \hdots \wedge w_m 
\end{align*}
which we extend by linearity. Then $(\mathsf{Ext}, \mu, \eta, \mathsf{m}, \mathsf{u})$ is an algebra modality on $\mathsf{FVEC}_{\mathbb{Z}_2}$. For more details on this algebra modality, see \cite[Example 2.4]{hyland2003glueing}. 
\end{enumerate}
\end{example}

It is worth noting that both Example \ref{Symex}.i and Example \ref{Symex}.ii are in fact the same general construction of an algebra modality known as the free exponential modality \cite{mellies2009explicit}.  

\begin{definition}\label{diffcatdef} A \textbf{codifferential category} \cite{blute2006differential, blute2015derivations} is an additive symmetric monoidal category $(\mathbb{X}, \otimes, K)$ with an algebra modality $(\mathsf{S}, \mu, \eta, \mathsf{m}, \mathsf{u})$ which comes equipped with a \textbf{deriving transformation}, that is, a natural transformation
\[   \begin{array}[c]{c}\resizebox{!}{0.85cm}{%
\begin{tikzpicture}
	\begin{pgfonlayer}{nodelayer}
		\node [style=none] (63) at (2.75, 1.75) {};
		\node [style=function1] (64) at (2.75, 1.75) {$\mathsf{d}$};
		\node [style=object] (65) at (4.25, 2) {$\mathsf{S}A$};
		\node [style=object] (66) at (4.25, 1.5) {$A$};
		\node [style=none] (67) at (3.25, 2) {};
		\node [style=none] (68) at (3.25, 1.5) {};
		\node [style=object] (72) at (1.25, 1.75) {$\mathsf{S}A$};
	\end{pgfonlayer}
	\begin{pgfonlayer}{edgelayer}
		\draw [style=pre] (65) to (67.center);
		\draw [style=pre] (66) to (68.center);
		\draw [style=wire] (64) to (72);
	\end{pgfonlayer}
\end{tikzpicture}
  }%
\end{array}
\]
such that the following equalities\footnote{It should be noted that the interchange rule was not part of the definition in \cite{blute2006differential} but was later added to ensure that the coKleisli category of a differential category was a Cartesian differential category \cite{blute2009cartesian}.}  hold: 
\[   \begin{array}[c]{c}\resizebox{!}{1.25cm}{%
\begin{tikzpicture}
	\begin{pgfonlayer}{nodelayer}
		\node [style=none] (9) at (-4.75, 2) {\textbf{Constant Rule}};
		\node [style=none] (16) at (2.5, 1) {=};
		\node [style=none] (17) at (4.75, 1.25) {};
		\node [style=none] (18) at (4.75, 0.75) {};
		\node [style=function2] (19) at (3.25, 1.25) {$\mathsf{u}$};
		\node [style=none] (20) at (3.25, 0.75) {};
		\node [style=none] (21) at (2.5, 2) {\textbf{Linear Rule}};
		\node [style=none] (22) at (-6.5, 1) {};
		\node [style=function1] (23) at (-6.5, 1) {$\mathsf{d}$};
		\node [style=none] (24) at (-5, 1.25) {};
		\node [style=none] (25) at (-5, 0.75) {};
		\node [style=none] (26) at (-6, 1.25) {};
		\node [style=none] (27) at (-6, 0.75) {};
		\node [style=function2] (28) at (-7.75, 1) {$\mathsf{u}$};
		\node [style=none] (29) at (-4.75, 1) {=};
		\node [style=none] (30) at (-4, 1) {};
		\node [style=function1] (31) at (-4, 1) {$0$};
		\node [style=none] (32) at (-2.5, 1.25) {};
		\node [style=none] (33) at (-2.5, 0.75) {};
		\node [style=none] (34) at (-3.5, 1.25) {};
		\node [style=none] (35) at (-3.5, 0.75) {};
		\node [style=none] (36) at (0.75, 1) {};
		\node [style=function1] (37) at (0.75, 1) {$\mathsf{d}$};
		\node [style=none] (38) at (2.25, 1.25) {};
		\node [style=none] (39) at (2.25, 0.75) {};
		\node [style=none] (40) at (1.25, 1.25) {};
		\node [style=none] (41) at (1.25, 0.75) {};
		\node [style=function2] (42) at (-0.5, 1) {$\eta$};
		\node [style=none] (43) at (-1.25, 1) {};
	\end{pgfonlayer}
	\begin{pgfonlayer}{edgelayer}
		\draw [style=post] (19) to (17.center);
		\draw [style=post] (20.center) to (18.center);
		\draw [style=pre] (24.center) to (26.center);
		\draw [style=pre] (25.center) to (27.center);
		\draw [style=wire] (23) to (28);
		\draw [style=pre] (32.center) to (34.center);
		\draw [style=pre] (33.center) to (35.center);
		\draw [style=pre] (38.center) to (40.center);
		\draw [style=pre] (39.center) to (41.center);
		\draw [style=wire] (37) to (42);
		\draw [style=wire] (43.center) to (42);
	\end{pgfonlayer}
\end{tikzpicture}
  }%
\end{array}
\]
\[   \begin{array}[c]{c}\resizebox{!}{2cm}{%
\begin{tikzpicture}
	\begin{pgfonlayer}{nodelayer}
		\node [style=none] (1) at (12.25, -1.25) {\textbf{Product Rule}};
		\node [style=none] (15) at (14.5, -2.5) {$+$};
		\node [style=none] (22) at (8.25, -2.75) {};
		\node [style=function1] (23) at (8.25, -2.75) {$\mathsf{d}$};
		\node [style=none] (24) at (9.75, -2.5) {};
		\node [style=none] (25) at (9.75, -3) {};
		\node [style=none] (26) at (8.75, -2.5) {};
		\node [style=none] (27) at (8.75, -3) {};
		\node [style=none] (29) at (6.75, -2.75) {};
		\node [style=function1] (30) at (6.75, -2.75) {$\mathsf{m}$};
		\node [style=none] (31) at (5.5, -2.5) {};
		\node [style=none] (32) at (5.5, -3) {};
		\node [style=none] (33) at (6.25, -2.5) {};
		\node [style=none] (34) at (6.25, -3) {};
		\node [style=none] (35) at (11.25, -2.25) {};
		\node [style=function1] (36) at (11.25, -2.25) {$\mathsf{d}$};
		\node [style=none] (37) at (11.75, -2) {};
		\node [style=none] (38) at (11.75, -2.5) {};
		\node [style=none] (39) at (10.25, -2.25) {};
		\node [style=none] (40) at (12.25, -2.5) {};
		\node [style=none] (41) at (13.25, -3) {};
		\node [style=function1] (42) at (13.25, -3) {$\mathsf{m}$};
		\node [style=none] (44) at (10.25, -3.25) {};
		\node [style=none] (45) at (12.75, -2.75) {};
		\node [style=none] (46) at (12.75, -3.25) {};
		\node [style=none] (47) at (14.25, -3) {};
		\node [style=none] (48) at (14.25, -2) {};
		\node [style=none] (49) at (10, -2.75) {$=$};
		\node [style=none] (50) at (12.25, -2) {};
		\node [style=none] (51) at (12.75, -2) {};
		\node [style=none] (52) at (15.75, -3) {};
		\node [style=function1] (53) at (15.75, -3) {$\mathsf{d}$};
		\node [style=none] (54) at (16.25, -2.75) {};
		\node [style=none] (55) at (16.25, -3.25) {};
		\node [style=none] (56) at (14.75, -2) {};
		\node [style=none] (58) at (17.75, -2.25) {};
		\node [style=function1] (59) at (17.75, -2.25) {$\mathsf{m}$};
		\node [style=none] (60) at (14.75, -3) {};
		\node [style=none] (61) at (17.25, -2) {};
		\node [style=none] (62) at (17.25, -2.5) {};
		\node [style=none] (63) at (18.75, -2.25) {};
		\node [style=none] (66) at (16.75, -2.75) {};
		\node [style=none] (67) at (18.75, -3.25) {};
	\end{pgfonlayer}
	\begin{pgfonlayer}{edgelayer}
		\draw [style=pre] (24.center) to (26.center);
		\draw [style=pre] (25.center) to (27.center);
		\draw [style=wire] (31.center) to (33.center);
		\draw [style=wire] (32.center) to (34.center);
		\draw [style=wire] (30) to (23);
		\draw [style=wire] (36) to (39.center);
		\draw [style=wire] (38.center) to (40.center);
		\draw [style=wire] (44.center) to (46.center);
		\draw [style=post] (42) to (47.center);
		\draw [style=wire] (37.center) to (50.center);
		\draw [style=wire] (50.center) to (45.center);
		\draw [style=wire] (40.center) to (51.center);
		\draw [style=post] (51.center) to (48.center);
		\draw [style=post] (59) to (63.center);
		\draw [style=wire] (54.center) to (66.center);
		\draw [style=wire] (60.center) to (53);
		\draw [style=wire] (56.center) to (61.center);
		\draw [style=wire] (66.center) to (62.center);
		\draw [style=post] (55.center) to (67.center);
	\end{pgfonlayer}
\end{tikzpicture}
  }%
\end{array}
\]
\[   \begin{array}[c]{c}\resizebox{!}{2cm}{%
\begin{tikzpicture}
	\begin{pgfonlayer}{nodelayer}
		\node [style=none] (93) at (1.25, -1) {\textbf{Interchange Rule}};
		\node [style=none] (100) at (-8.25, -2.25) {=};
		\node [style=none] (105) at (-8, -1.25) {\textbf{Chain Rule}};
		\node [style=none] (106) at (-9.75, -2.25) {};
		\node [style=function1] (107) at (-9.75, -2.25) {$\mathsf{d}$};
		\node [style=none] (108) at (-8.5, -2) {};
		\node [style=none] (109) at (-8.5, -2.5) {};
		\node [style=none] (110) at (-9.25, -2) {};
		\node [style=none] (111) at (-9.25, -2.5) {};
		\node [style=function2] (112) at (-11, -2.25) {$\mu$};
		\node [style=none] (113) at (-11.75, -2.25) {};
		\node [style=none] (114) at (-7.25, -2.25) {};
		\node [style=function1] (115) at (-7.25, -2.25) {$\mathsf{d}$};
		\node [style=none] (118) at (-6.75, -2) {};
		\node [style=none] (119) at (-6.75, -2.5) {};
		\node [style=none] (120) at (-8, -2.25) {};
		\node [style=none] (121) at (-6.25, -2.75) {};
		\node [style=none] (122) at (-6.25, -1.75) {};
		\node [style=function2] (123) at (-5.25, -1.75) {$\mu$};
		\node [style=none] (124) at (-5.25, -2.75) {};
		\node [style=function1] (125) at (-5.25, -2.75) {$\mathsf{d}$};
		\node [style=none] (127) at (-3, -3) {};
		\node [style=none] (128) at (-4.75, -2.5) {};
		\node [style=none] (129) at (-4.75, -3) {};
		\node [style=none] (130) at (-3, -2) {};
		\node [style=none] (131) at (-4, -2) {};
		\node [style=function1] (132) at (-4, -2) {$\mathsf{m}$};
		\node [style=none] (133) at (-4.5, -1.75) {};
		\node [style=none] (134) at (-4.5, -2.25) {};
		\node [style=none] (147) at (-2.5, -2.5) {};
		\node [style=none] (148) at (0.75, -2.75) {};
		\node [style=none] (149) at (-0.25, -1.75) {};
		\node [style=function1] (150) at (-1.75, -2.5) {$\mathsf{d}$};
		\node [style=none] (151) at (-1.25, -2.75) {};
		\node [style=none] (152) at (-1.25, -2.25) {};
		\node [style=function1] (153) at (-0.25, -1.75) {$\mathsf{d}$};
		\node [style=none] (154) at (0.75, -2) {};
		\node [style=none] (155) at (0.75, -1.5) {};
		\node [style=none] (156) at (0.25, -2) {};
		\node [style=none] (157) at (0.25, -1.5) {};
		\node [style=none] (158) at (-1, -1.75) {};
		\node [style=object] (159) at (1, -2.25) {$=$};
		\node [style=none] (160) at (1.25, -2.5) {};
		\node [style=none] (161) at (4.75, -2) {};
		\node [style=none] (162) at (3.75, -1.75) {};
		\node [style=function1] (163) at (2, -2.5) {$\mathsf{d}$};
		\node [style=none] (164) at (2.5, -2.75) {};
		\node [style=none] (165) at (2.5, -2.25) {};
		\node [style=function1] (166) at (3.5, -1.75) {$\mathsf{d}$};
		\node [style=none] (167) at (4.75, -2.75) {};
		\node [style=none] (168) at (4.75, -1.5) {};
		\node [style=none] (169) at (4, -2) {};
		\node [style=none] (170) at (4, -1.5) {};
		\node [style=none] (171) at (2.75, -1.75) {};
		\node [style=none] (172) at (3.5, -2.75) {};
		\node [style=none] (173) at (4.25, -2.75) {};
		\node [style=none] (174) at (4.5, -2) {};
	\end{pgfonlayer}
	\begin{pgfonlayer}{edgelayer}
		\draw [style=pre] (108.center) to (110.center);
		\draw [style=pre] (109.center) to (111.center);
		\draw [style=wire] (107) to (112);
		\draw [style=wire] (113.center) to (112);
		\draw [style=wire] (115) to (120.center);
		\draw [style=wire] (118.center) to (122.center);
		\draw [style=wire] (119.center) to (121.center);
		\draw [style=wire] (122.center) to (123);
		\draw [style=pre] (127.center) to (129.center);
		\draw [style=wire] (121.center) to (125);
		\draw [style=pre] (130.center) to (132);
		\draw [style=wire] (123) to (133.center);
		\draw [style=wire] (128.center) to (134.center);
		\draw [style=wire, in=0, out=180] (150) to (147.center);
		\draw [style=pre] (154.center) to (156.center);
		\draw [style=pre] (155.center) to (157.center);
		\draw [style=pre] (148.center) to (151.center);
		\draw [style=wire] (153) to (158.center);
		\draw [style=wire] (158.center) to (152.center);
		\draw [style=wire, in=0, out=180] (163) to (160.center);
		\draw [style=pre] (168.center) to (170.center);
		\draw [style=wire] (166) to (171.center);
		\draw [style=wire] (171.center) to (165.center);
		\draw [style=wire] (164.center) to (172.center);
		\draw [style=wire] (169.center) to (173.center);
		\draw [style=post] (173.center) to (167.center);
		\draw [style=wire] (172.center) to (174.center);
		\draw [style=post] (174.center) to (161.center);
	\end{pgfonlayer}
\end{tikzpicture}
  }%
\end{array}
\]
\end{definition}

 The deriving transformation axioms are probably best understood by studying Example \ref{diffex}.ii below, which arises from polynomial differentiation. Briefly, the deriving transformation can be viewed as a sort of internal differential operator which maps $f(x)$ to its derivative $f^\prime(x)\mathsf{d}(x)$. In Example \ref{Symex} it was discussed that $\mu$ and $\eta$ correspond to composition, while $\mathsf{m}$ and $\mathsf{u}$ correspond to multiplication. Therefore the deriving transformation axioms are precisely their namesakes from classical differential calculus. From a Fock space perspective, as explained by Fiore in \cite{fiore2015axiomatics}, the deriving transformation should be interpreted as a creation operator \cite{geroch1985mathematical}. For a more in-depth discussion on the interpretation of these axioms and (co)differential categories in general, we refer to reader to the original paper on differential categories \cite{blute2006differential}. 

\begin{example}\label{diffex} \normalfont Here are some examples of codifferential categories with the algebra modalities from Example \ref{Symex}. Many other examples of (co)differential categories can also be found in \cite{blute2006differential, cockettlemay2018, ehrhard2017introduction}.
\begin{enumerate}[{\em (i)}]
\item $\mathsf{REL}$ is a codifferential category with algebra modality $(\oc, \mu, \eta, \mathsf{m}, \mathsf{u})$ (as defined in Example \ref{Symex}.i) and with deriving transformation $\mathsf{d} \subseteq \oc X \times (\oc X \times X)$ defined as follows:
\[\mathsf{d} := \lbrace \left( f, (f - \delta_x, x)\right) \vert~ x \in \mathsf{supp}(f) \rbrace \subset \oc X \times (\oc X \times X) \]
Note that for $x \in \mathsf{supp}(f)$, the finite multiset $f - \delta_x$ is well defined since if $y \neq x$, $f(y) - \delta_x(y) = f(y)$, while for $x$, $f(x) \neq 0$ which implies that $f(x) = n+1$ for some $n$, and therefore $f(x) - \delta_x(x) = n$. Viewing finite multisets as monomials, the deriving transformation is the relation which relates $x^{k_1}_1 x^{k_2}_2 \hdots x^{k_n}_n$ to $(x^{k_1}_1 x^{k_2}_2 \hdots x^{k_i-1}_i \hdots x^{k_n}_n, x_i)$.  See \cite[Proposition 2.7]{blute2006differential} for more details on this example. 
\item Let $\mathbb{K}$ be a field. Then $\mathsf{VEC}_\mathbb{K}$ is a codifferential category with the algebra modality $(\mathsf{Sym}, \mu, \eta, \mathsf{m}, \mathsf{u})$ (as defined in Example \ref{Symex}.ii) and with deriving transformation defined on pure tensors as follows: 
\begin{align*}
\mathsf{d}: \mathsf{Sym}(V) &\to \mathsf{Sym}(V) \otimes V  \\
v_1 \otimes_s \hdots \otimes_s v_n &\mapsto \sum \limits^n_{i=1} (v_1 \otimes_s \hdots \otimes_s v_{i-1} \otimes_s v_{i+1} \otimes_s \hdots \otimes_s v_n) \otimes v_i 
\end{align*}
which we then extend by linearity. In particular if $X= \lbrace x_1, \hdots \rbrace$ is a basis for $V$, then the deriving transformation can alternatively be described as a map $\mathsf{d}: \mathbb{K}[X] \to \mathbb{K}[X] \otimes V$ which is given by taking the sum of partial derivatives: 
\vspace{-5pt}
\[\mathsf{d}(\mathsf{p}(x_1, \hdots, x_n))=  \sum_{i=1}^{n} \frac{\partial \mathsf{p}}{\partial x_i}(x_1, \hdots, x_n) \otimes x_i \]
See \cite[Proposition 2.9]{blute2006differential} for more details on this example. 
\item $\mathsf{FVEC}_{\mathbb{Z}_2}$ is a codifferential category with the algebra modality $(\mathsf{Ext}, \mu, \eta, \mathsf{m}, \mathsf{u})$ (as defined in Example \ref{Symex}.iii) and with deriving transformation defined on pure wedge products as follows: 
\begin{align*}
\mathsf{d}: \mathsf{Ext}(V) &\to \mathsf{Ext}(V) \otimes V  \\
v_1 \wedge \hdots \wedge v_n &\mapsto \sum \limits^n_{i=1} (v_1 \wedge \hdots \wedge v_{i-1} \wedge v_{i+1} \wedge \hdots \wedge v_n) \otimes v_i 
\end{align*}
which we then extend by linearity. Furthermore, note that the $\mathsf{d}$ could have been defined in $\mathsf{FVEC}_\mathbb{K}$ for arbitrary $\mathbb{K}$, with the necessary scalar multiplication by permutation signs. 
\end{enumerate}
\end{example}

As mentioned in the introduction, recall that the title of the paper is ``$\mathsf{FHilb}$ is not an \emph{interesting} (co)differential category'' rather than ``$\mathsf{FHilb}$ is not a (co)differential category''. This is because there is a trivial (and therefore uninteresting) codifferential category structure on $\mathsf{FHilb}$ induced by its zero object. Recall that a zero object in a category $\mathbb{X}$ is an object $Z$ such that for every object $A$ there is a unique map to and from $Z$. Note that if $\mathbb{X}$ is an additive category, these unique maps must be $0$. 

\begin{definition}\label{trivdef} \normalfont A codifferential category with algebra modality  $(\mathsf{S}, \mu, \eta, \mathsf{m}, \mathsf{u})$ and deriving transformation $\mathsf{d}$ is said to be \textbf{trivial} if for each object $A$, $\mathsf{S}A$ is a zero object. 
\end{definition}

\begin{example} \normalfont Any additive symmetric monoidal category with a zero object $Z$ is a trivial codifferential category setting $\mathsf{S}(-) = Z$, and $\mu=0$, $\eta=0$, $\mathsf{m}=0$, $\mathsf{u}=0$, and $\mathsf{d}=0$. In particular, $\mathsf{FVEC}_{\mathbb{K}}$ (for any field $\mathbb{K}$) and $\mathsf{FHilb}$ both have zero objects, the zero vector space, and therefore are trivial codifferential categories. The goal is now to show that this is the only codifferential category structure on $\mathsf{FHilb}$! 
\end{example} 

\section{$\mathsf{FHilb}$ is not an interesting codifferential category}

As explained in the introduction, there is an incompatibility between possible deriving transformations on $\mathsf{FHilb}$ and the trace. And as we alluded to earlier, the proof of this result is similar to the proof that the Weyl algebra in two variables over a field of characteristic zero does not have any finite dimensional representations:

\begin{proposition}\cite{fenn2007weyl}\label{ST} Let $\mathbb{K}$ be a field of characteristic zero and let $\mathsf{W} := \mathbb{K}(x,y)/\langle xy-yx -1 \rangle$, where $\mathbb{K}(x,y)$ is the non-commutative polynomial ring. Then there exists a $\mathbb{K}$-algebra morphism \\ \noindent ${\sigma: \mathsf{W} \to \mathsf{MAT}(\mathbb{K})_n}$ if and only if $n=0$. 
\end{proposition}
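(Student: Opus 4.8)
The plan is to prove the statement via the standard trace argument. Suppose toward a contradiction that there is a $\mathbb{K}$-algebra morphism $\sigma: \mathsf{W} \to \mathsf{MAT}(\mathbb{K})_n$ with $n \geq 1$. Since $\sigma$ is a $\mathbb{K}$-algebra morphism, it must preserve the defining relation of $\mathsf{W}$, namely $xy - yx = 1$. Writing $X := \sigma(x)$ and $Y := \sigma(y)$, which are $n \times n$ matrices over $\mathbb{K}$, and recalling that $\sigma$ sends the multiplicative unit $1 \in \mathsf{W}$ to the identity matrix $I_n$, the relation gives the matrix equation
\[ XY - YX = I_n. \]

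First I would apply the trace to both sides of this equation. The trace $\mathsf{tr}: \mathsf{MAT}(\mathbb{K})_n \to \mathbb{K}$ is linear and satisfies the fundamental cyclic property $\mathsf{tr}(XY) = \mathsf{tr}(YX)$ for any two matrices. Hence the left-hand side has trace $\mathsf{tr}(XY - YX) = \mathsf{tr}(XY) - \mathsf{tr}(YX) = 0$. On the other hand, the right-hand side has trace $\mathsf{tr}(I_n) = n \cdot 1_{\mathbb{K}}$, where $1_{\mathbb{K}}$ is the multiplicative identity of $\mathbb{K}$. Equating the two traces yields $n \cdot 1_{\mathbb{K}} = 0$ in $\mathbb{K}$.

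Now I would invoke the characteristic-zero hypothesis. Since $\mathbb{K}$ has characteristic zero, the only way to have $n \cdot 1_{\mathbb{K}} = 0$ is $n = 0$. This contradicts our assumption that $n \geq 1$, so no such algebra morphism exists for positive $n$. Conversely, when $n = 0$ the matrix algebra $\mathsf{MAT}(\mathbb{K})_0$ is the trivial (zero) ring, and there is trivially a (unique) $\mathbb{K}$-algebra morphism into it, so the equivalence holds in both directions.

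The argument is short and essentially computational, so there is no serious obstacle; the only subtlety worth flagging is the bookkeeping around the $n=0$ edge case and the convention that a unital $\mathbb{K}$-algebra morphism preserves the unit, which is exactly what forces the right-hand side to be $I_n$ rather than an arbitrary scalar multiple. It is precisely the cyclicity of the trace together with characteristic zero that drives the incompatibility, which is why the same mechanism reappears later when relating the trace of $\mathsf{FHilb}$ to codifferential structure.
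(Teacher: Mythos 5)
Your proof is correct and follows exactly the same route as the paper's: apply the trace to $\sigma(x)\sigma(y)-\sigma(y)\sigma(x)=I_n$, use cyclicity of the trace to get $n\cdot 1_{\mathbb{K}}=0$, and conclude $n=0$ from characteristic zero. Your explicit handling of the $n=0$ direction and the unitality convention is a minor addition the paper leaves implicit, but the argument is the same.
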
 
\vspace{-5pt}
\begin{proof} Let $\sigma: W \to \mathsf{MAT}(\mathbb{K})_n$ be a $\mathbb{K}$-algebra morphism. Then we have that:
\vspace{-5pt}
\begin{align*}
n &=~ \mathsf{Tr}(\mathsf{I}_n) =~ \mathsf{Tr}(\sigma(1)) =~ \mathsf{Tr}(\sigma(xy-yx)) =~ \mathsf{Tr}(\sigma(x)\sigma(y) - \sigma(y)\sigma(x))\\
&=~ \mathsf{Tr}(\sigma(x)\sigma(y)) - \mathsf{Tr}(\sigma(y)\sigma(x)) =~ \mathsf{Tr}(\sigma(x)\sigma(y)) - \mathsf{Tr}(\sigma(x)\sigma(y)) =~ 0 
\end{align*}
\end{proof} 

The above proof required (i) a trace, (ii) additive inverses, (iii) that $xy-yx=1$, and (iv) that the trace of the identity matrix is equal to its dimension. In particular regarding this last ingredient, notice that Proposition \ref{ST} fails for fields of non-zero characteristic -- but more on this later (Example \ref{failex}.ii). We begin by first briefly reviewing the notion of traced symmetric monoidal categories and consider their additive version. 

Traced monoidal categories, introduced by Joyal, Street, and Verity \cite{joyal1996traced}, are balanced monoidal categories equipped with a generalization of the classical notion of partial traces for finite-dimensional vector spaces. In this paper, we only need to concern ourselves with the symmetric case, and so we provide the (slightly modified) axiomatization and graphical representation of traced symmetric monoidal categories as found in \cite{hasegawa1997recursion}. Alternative (but equivalent) axiomatizations can be found in \cite{hasegawa2009traced,hasegawa2008finite}. 

\begin{definition}\label{tracedef} A \textbf{traced symmetric monoidal category} \cite{hasegawa1997recursion} is a symmetric monoidal category $\mathbb{X}$ \\ \noindent equipped with a family of functions (for each triple of objects $X, A, B$ of $\mathbb{X}$):
\begin{equation}\label{}\begin{gathered} \mathsf{Tr}^X: \mathbb{X}(X \otimes A, X \otimes B) \to \mathbb{X}(A, B)
 \end{gathered}\end{equation}
 where for a map $f: X \otimes A \to X \otimes B$, its image $\mathsf{Tr}^X(f): A \to B$ is represented graphically as:
 \[    \begin{array}[c]{c}\resizebox{!}{1.15cm}{%
\begin{tikzpicture}
\node[myblock]
  (G) 
  {$f$};
\draw[rounded corners,onearrow={0.525}{}]
  ([shift={(0pt,20pt)}]G.south east) -|
  ([shift={(15pt,10pt)}]G.north east) --
  ([shift={(-15pt,10pt)}]G.north west) |-
  ([shift={(0pt,20pt)}]G.south west);
\draw[zeroarrow={0}{$A$}]
  ([shift={(-25pt,-20pt)}]G.north west) -- ([shift={(0pt,-20pt)}]G.north west);
\draw[onearrow={1}{$B$}]
  ([shift={(0pt,-20pt)}]G.north east) -- ([shift={(25pt,-20pt)}]G.north east);
\end{tikzpicture}
  }%
\end{array} \]
and such that the following equalities are satisfied: 

\noindent ~~~~~~~~~~~~~~~~~~~~~~~~~~~~~~~~~~~~~~~~~~~~~~~~~~~~~~~~~~~~~~~~~~~~~~~~~~~\textbf{Tightening}:
\[\begin{array}[c]{c}\resizebox{!}{1.25cm}{%
\begin{tikzpicture}
\node[myblock]
  (G) 
  {$f$};
    \node[mysquare,left=10pt of {G}] (S) {$g$}; 
      \node[mysquare,right=10pt of {G}] (T) {$h$}; 
\draw[rounded corners,onearrow={0.525}{}]
  ([shift={(0pt,25pt)}]G.south east) -|
  ([shift={(45pt,10pt)}]G.north east) --
  ([shift={(-45pt,10pt)}]G.north west) |-
  ([shift={(0pt,25pt)}]G.south west);
\draw[zeroarrow={0}{}]
  ([shift={(-35pt,-0.25cm)}]S.north west) -- ([shift={(0pt,-0.25cm)}]S.north west);
\draw[zeroarrow={0}{}]
  ([shift={(0pt,-0.25cm)}]S.north east) -- ([shift={(0pt,-0.5cm)}]G.north west);
\draw[zeroarrow={0}{}]
  ([shift={(0pt,-0.5cm)}]G.north east) -- ([shift={(0pt,-0.25cm)}]T.north west);
  \draw[onearrow={1}{}]
  ([shift={(0pt,-0.25cm)}]T.north east) -- ([shift={(35pt,-0.25cm)}]T.north east);
  \draw[thick,dotted] ($(S.north west)+(-0.4,0.4)$)  rectangle ($(T.south east)+(0.4,-0.4)$);
\end{tikzpicture} }%
   \end{array}=
   \begin{array}[c]{c}\resizebox{!}{1.15cm}{%
\begin{tikzpicture}
\node[myblock]
  (G) 
  {$f$};
    \node[mysquare,left=20pt of {G}] (S) {$g$}; 
      \node[mysquare,right=20pt of {G}] (T) {$h$}; 
\draw[rounded corners,onearrow={0.525}{}]
  ([shift={(0pt,20pt)}]G.south east) -|
  ([shift={(15pt,10pt)}]G.north east) --
  ([shift={(-15pt,10pt)}]G.north west) |-
  ([shift={(0pt,20pt)}]G.south west);
\draw[zeroarrow={0}{}]
  ([shift={(-15pt,-0.25cm)}]S.north west) -- ([shift={(0pt,-0.25cm)}]S.north west);
\draw[zeroarrow={0}{}]
  ([shift={(0pt,-0.25cm)}]S.north east) -- ([shift={(0pt,-0.5cm)}]G.north west);
\draw[zeroarrow={0}{}]
  ([shift={(0pt,-0.5cm)}]G.north east) -- ([shift={(0pt,-0.25cm)}]T.north west);
  \draw[onearrow={1}{}]
  ([shift={(0pt,-0.25cm)}]T.north east) -- ([shift={(15pt,-0.25cm)}]T.north east);
\end{tikzpicture} 
  }%
   \end{array}\]
~~~~~~~~~~~~~~~~~~~~~~~~~~~~~~~~~~~~~~~~~ \textbf{Sliding} ~~~~~~~~~~~~~~~~~~~~~~~~~~~~~~~~~~~~~~~~~~~~~~~~~~~~~~~~~~~~~ \textbf{Vanishing}
     \[ \begin{array}[c]{c}\resizebox{!}{1.15cm}{%
 \begin{tikzpicture}
\node[myblock](G) {$f$};
      \node[mysquare,above right=-0.55cm and 0.25cm of {G}] (T) {$h$}; 
\draw[rounded corners,onearrow={0.525}{}]
  ([shift={(0cm,0.25cm)}]T.south east) -|
  ([shift={(35pt,10pt)}]G.north east) --
  ([shift={(-15pt,10pt)}]G.north west) |-
  ([shift={(0pt,20pt)}]G.south west);
\draw[zeroarrow={0}{}]
  ([shift={(-25pt,-20pt)}]G.north west) -- ([shift={(0pt,-20pt)}]G.north west);
  \draw[zeroarrow={0}{}]
  ([shift={(0pt,-0.25cm)}]T.north west) -- ([shift={(0pt,-0.26cm)}]G.north east);
\draw[onearrow={1}{}]
  ([shift={(0pt,-20pt)}]G.north east) -- ([shift={(45pt,-20pt)}]G.north east);
\end{tikzpicture}}%
   \end{array}=
   \begin{array}[c]{c}\resizebox{!}{1.15cm}{%
 \begin{tikzpicture}
\node[myblock](G) {$f$};
      \node[mysquare,above left=-0.55cm and 0.25cm of {G}] (T) {$h$}; 
\draw[rounded corners,onearrow={0.525}{}]
  ([shift={(0cm,0.75cm)}]G.south east) -|
  ([shift={(15pt,10pt)}]G.north east) --
  ([shift={(-10pt,10pt)}]T.north west) |-
  ([shift={(0pt,0.25cm)}]T.south west);
\draw[zeroarrow={0}{}]
  ([shift={(-45pt,-20pt)}]G.north west) -- ([shift={(0pt,-20pt)}]G.north west);
  \draw[zeroarrow={0}{}]
  ([shift={(0pt,-0.25cm)}]T.north east) -- ([shift={(0pt,-0.26cm)}]G.north west);
\draw[onearrow={1}{}]
  ([shift={(0pt,-20pt)}]G.north east) -- ([shift={(15pt,-20pt)}]G.north east);
\end{tikzpicture}}%
   \end{array} ~~~~~~~~~~~  \begin{array}[c]{c}\resizebox{!}{1.5cm}{%
\begin{tikzpicture}
\node[myblock]
  (G) 
  {$f$};
\draw[rounded corners,onearrow={0.525}{}]
  ([shift={(0.4,22.5pt)}]G.south east) -|
  ([shift={(20pt,10pt)}]G.north east) --
  ([shift={(-20pt,10pt)}]G.north west) |-
  ([shift={(-0.4,22.5pt)}]G.south west);
\draw[zeroarrow={0}{}]
  ([shift={(-25pt,-25pt)}]G.north west) -- ([shift={(0pt,-25pt)}]G.north west);
\draw[onearrow={1}{}]
  ([shift={(0pt,-25pt)}]G.north east) -- ([shift={(25pt,-25pt)}]G.north east);
  \draw[zeroarrow={0}{}]
  ([shift={(-0.4,25pt)}]G.south west) -- ([shift={(0pt,25pt)}]G.south west);
\draw[zeroarrow={1}{}]
  ([shift={(0pt,25pt)}]G.south east) -- ([shift={(0.4,25pt)}]G.south east);
    \draw[zeroarrow={0}{}]
  ([shift={(-0.4,20pt)}]G.south west) -- ([shift={(0pt,20pt)}]G.south west);
\draw[zeroarrow={1}{}]
  ([shift={(0pt,20pt)}]G.south east) -- ([shift={(0.4,20pt)}]G.south east);
    \draw[thick,dotted] ($(G.north west)+(-0.4,0.2)$)  rectangle ($(G.south east)+(0.4,-0.4)$);
\end{tikzpicture}}%
   \end{array}=
   \begin{array}[c]{c}\resizebox{!}{1.5cm}{%
\begin{tikzpicture}
\node[myblock]
  (G) 
  {$f$};
\draw[rounded corners,onearrow={0.525}{}]
  ([shift={(0pt,20pt)}]G.south east) -|
  ([shift={(20pt,20pt)}]G.north east) --
  ([shift={(-20pt,20pt)}]G.north west) |-
  ([shift={(0pt,20pt)}]G.south west);
  \draw[rounded corners,onearrow={0.525}{}]
  ([shift={(0pt,25pt)}]G.south east) -|
  ([shift={(15pt,10pt)}]G.north east) --
  ([shift={(-15pt,10pt)}]G.north west) |-
  ([shift={(0pt,25pt)}]G.south west);
\draw[zeroarrow={0}{}]
  ([shift={(-25pt,-20pt)}]G.north west) -- ([shift={(0pt,-20pt)}]G.north west);
\draw[onearrow={1}{}]
  ([shift={(0pt,-20pt)}]G.north east) -- ([shift={(25pt,-20pt)}]G.north east);
\end{tikzpicture}}%
   \end{array}\]
~~~~~~~~~~~~~~~~~~~~~~~~~~~~~~~~~~\textbf{Superposition}
~~~~~~~~~~~~~~~~~~~~~~~~~~~~~~~~~~~~~~~~~~~~~~~~~~~~~~~~~~ \textbf{Yanking}
    \[  \begin{array}[c]{c}\resizebox{!}{1.75cm}{%
\begin{tikzpicture}
\node[myblock]
  (G) 
  {$f$};
      \node[mysquare,below=5pt of {G}] (S) {$g$}; 
\draw[rounded corners,onearrow={0.525}{}]
  ([shift={(0pt,20pt)}]G.south east) -|
  ([shift={(15pt,10pt)}]G.north east) --
  ([shift={(-15pt,10pt)}]G.north west) |-
  ([shift={(0pt,20pt)}]G.south west);
\draw[zeroarrow={0}{}]
  ([shift={(-25pt,-20pt)}]G.north west) -- ([shift={(0pt,-20pt)}]G.north west);
\draw[onearrow={1}{}]
  ([shift={(0pt,-20pt)}]G.north east) -- ([shift={(25pt,-20pt)}]G.north east);
  \draw[zeroarrow={0}{}]
  ([shift={(-30pt,-0.25cm)}]S.north west) -- ([shift={(0pt,-0.25cm)}]S.north west);
\draw[onearrow={1}{}]
  ([shift={(0pt,-0.25cm)}]S.north east) -- ([shift={(30pt,-0.25cm)}]S.north east);
      \draw[thick,dotted] ($(G.north west)+(-0.4,0.1)$)  rectangle ($(G.south east)+(0.4,-0.8)$);
\end{tikzpicture} }%
   \end{array}=
   \begin{array}[c]{c}\resizebox{!}{1.75cm}{%
\begin{tikzpicture}
\node[myblock]
  (G) 
  {$f$};
      \node[mysquare,below=5pt of {G}] (S) {$g$}; 
\draw[rounded corners,onearrow={0.525}{}]
  ([shift={(0pt,20pt)}]G.south east) -|
  ([shift={(15pt,10pt)}]G.north east) --
  ([shift={(-15pt,10pt)}]G.north west) |-
  ([shift={(0pt,20pt)}]G.south west);
\draw[zeroarrow={0}{}]
  ([shift={(-25pt,-20pt)}]G.north west) -- ([shift={(0pt,-20pt)}]G.north west);
\draw[onearrow={1}{}]
  ([shift={(0pt,-20pt)}]G.north east) -- ([shift={(25pt,-20pt)}]G.north east);
  \draw[zeroarrow={0}{}]
  ([shift={(-30pt,-0.25cm)}]S.north west) -- ([shift={(0pt,-0.25cm)}]S.north west);
\draw[onearrow={1}{}]
  ([shift={(0pt,-0.25cm)}]S.north east) -- ([shift={(30pt,-0.25cm)}]S.north east);
\end{tikzpicture} }%
   \end{array} ~~~~~~~~~~~~~ \begin{array}[c]{c}\resizebox{!}{1cm}{%
\begin{tikzpicture}
\node[]
  (G) 
  {};
  \node[below=5pt of G]
  (S) 
  {};
    \node[right=10pt of S]
  (T) 
  {};
    \node[left=10pt of S]
  (U) 
  {};
\draw[rounded corners,onearrow={0.525}{}]
  ([shift={(5pt,5pt)}]G.south east) -|
  ([shift={(25pt,10pt)}]G.north east) --
  ([shift={(-25pt,10pt)}]G.north west) |-
  ([shift={(-5pt,5pt)}]G.south west);
  \draw[zeroarrow={0}{}]
  ([shift={(0pt,-0.25cm)}]U.north west) -- ([shift={(12.65pt,10pt)}]S.north west);
\draw[zeroarrow={1}{}]
  ([shift={(-12.65pt,10pt)}]S.north east) -- ([shift={(0pt,-0.25cm)}]T.north east);
    \draw[onearrow={1}{}]
  ([shift={(0pt,-0.25cm)}]T.north east) -- ([shift={(20pt,-0.25cm)}]T.north east);
    \draw[zeroarrow={0}{}]
  ([shift={(-20pt,-0.25cm)}]U.north west) -- ([shift={(0pt,-0.25cm)}]U.north west);
  \end{tikzpicture} }%
   \end{array}=
   \begin{array}[c]{c}\resizebox{!}{0.25cm}{%
\begin{tikzpicture}
  \node[]
  (S) 
  {};
  \draw[zeroarrow={0}{}]
  ([shift={(-30pt,-0.25cm)}]S.north west) -- ([shift={(5pt,-0.25cm)}]S.north west);
\draw[onearrow={1}{}]
  ([shift={(-5pt,-0.25cm)}]S.north east) -- ([shift={(30pt,-0.25cm)}]S.north east);
  \end{tikzpicture} }%
   \end{array} \]
\end{definition}

\begin{definition}\label{addtdef} An \textbf{additive traced symmetric monoidal category} is a traced symmetric monoidal category which is an additive symmetric monoidal category such that the trace is a monoid morphism: 
 \[   \begin{array}[c]{c}\resizebox{!}{1.15cm}{%
\begin{tikzpicture}
\node[myblock]
  (G) 
  {$f+g$};
\draw[rounded corners,onearrow={0.525}{}]
  ([shift={(0pt,20pt)}]G.south east) -|
  ([shift={(15pt,10pt)}]G.north east) --
  ([shift={(-15pt,10pt)}]G.north west) |-
  ([shift={(0pt,20pt)}]G.south west);
\draw[zeroarrow={0}{}]
  ([shift={(-25pt,-20pt)}]G.north west) -- ([shift={(0pt,-20pt)}]G.north west);
\draw[onearrow={1}{}]
  ([shift={(0pt,-20pt)}]G.north east) -- ([shift={(25pt,-20pt)}]G.north east);
\end{tikzpicture}}%
\end{array} =   \begin{array}[c]{c}\resizebox{!}{1.15cm}{%
\begin{tikzpicture}
\node[myblock]
  (G) 
  {$f$};
\draw[rounded corners,onearrow={0.525}{}]
  ([shift={(0pt,20pt)}]G.south east) -|
  ([shift={(15pt,10pt)}]G.north east) --
  ([shift={(-15pt,10pt)}]G.north west) |-
  ([shift={(0pt,20pt)}]G.south west);
\draw[zeroarrow={0}{}]
  ([shift={(-25pt,-20pt)}]G.north west) -- ([shift={(0pt,-20pt)}]G.north west);
\draw[onearrow={1}{}]
  ([shift={(0pt,-20pt)}]G.north east) -- ([shift={(25pt,-20pt)}]G.north east);
\end{tikzpicture}}%
\end{array}\!+\!   \begin{array}[c]{c}\resizebox{!}{1.15cm}{%
\begin{tikzpicture}
\node[myblock]
  (G) 
  {$g$};
\draw[rounded corners,onearrow={0.525}{}]
  ([shift={(0pt,20pt)}]G.south east) -|
  ([shift={(15pt,10pt)}]G.north east) --
  ([shift={(-15pt,10pt)}]G.north west) |-
  ([shift={(0pt,20pt)}]G.south west);
\draw[zeroarrow={0}{}]
  ([shift={(-25pt,-20pt)}]G.north west) -- ([shift={(0pt,-20pt)}]G.north west);
\draw[onearrow={1}{}]
  ([shift={(0pt,-20pt)}]G.north east) -- ([shift={(25pt,-20pt)}]G.north east);
\end{tikzpicture} }%
\end{array}\quad \quad\begin{array}[c]{c}\resizebox{!}{1.15cm}{%
\begin{tikzpicture}
\node[myblock]
  (G) 
  {$0$};
\draw[rounded corners,onearrow={0.525}{}]
  ([shift={(0pt,20pt)}]G.south east) -|
  ([shift={(15pt,10pt)}]G.north east) --
  ([shift={(-15pt,10pt)}]G.north west) |-
  ([shift={(0pt,20pt)}]G.south west);
\draw[zeroarrow={0}{}]
  ([shift={(-25pt,-20pt)}]G.north west) -- ([shift={(0pt,-20pt)}]G.north west);
\draw[onearrow={1}{}]
  ([shift={(0pt,-20pt)}]G.north east) -- ([shift={(25pt,-20pt)}]G.north east);
\end{tikzpicture}}%
\end{array} =   \begin{array}[c]{c}\resizebox{!}{0.45cm}{%
\begin{tikzpicture}
      \node[mysquare] (T) {$0$}; 
\draw[zeroarrow={0}{}]
  ([shift={(-15pt,-0.25cm)}]T.north west) -- ([shift={(0pt,-0.25cm)}]T.north west);
  \draw[onearrow={1}{}]
  ([shift={(0pt,-0.25cm)}]T.north east) -- ([shift={(15pt,-0.25cm)}]T.north east);
\end{tikzpicture} }%
   \end{array} \] 
\end{definition}

\begin{example}\label{tracex} \normalfont It is well known that a compact closed category is canonically a traced symmetric monoi- \\ \noindent dal category, and conversely that every traced symmetric monoidal category induces a compact closed category (known as the $Int$-construction) \cite{abramsky1992new, hasegawa2009traced, joyal1996traced}. This relation still holds in the additive case. Any additive symmetric monoidal category which is compact closed is an additive traced symmetric monoidal category, and conversely given an additive traced symmetric monoidal category, its induced compact closed category is an additive symmetric monoidal category. This observation gives us the following well-known examples of additive traced symmetric monoidal categories (whose traces are induced from compact closed structure). 
\begin{enumerate}[{\em (i)}]
\item $\mathsf{REL}$ is an additive traced symmetric monoidal category where for a relation $R \subseteq (X \times Y) \times (X \times Z)$, its trace $\mathsf{Tr}^X(R) \subseteq Y \times Z$ is defined as $\mathsf{Tr}^X(R) := \lbrace (y,z) \vert~ \exists x\in X. ~ \left( (y,x), (x,z) \right) \in R \rbrace$. 
\item Let $\mathbb{K}$ be a field. Then $\mathsf{FVEC}_\mathbb{K}$ is an additive traced symmetric monoidal category where the trace is given by the standard partial trace of matrices/linear maps. In particular,  $\mathsf{FHilb}$ is an additive traced symmetric monoidal category. Explicitly, if $\lbrace x_1, \hdots, x_n \rbrace$ is a basis for a $\mathbb{K}$-vector space $X$, let $\lbrace x^*_1, \hdots, x^*_n \rbrace$ be the dual basis for the dual space $V^*$. Then for a linear map $f: X \otimes V \to X \otimes W$, its trace $\mathsf{Tr}^X(f): V \to W$ is defined as follows $\mathsf{Tr}^X(f)(v) := \sum \limits^{n}_{i=1} (x^*_i \otimes id_W)(f(x_i \otimes v))$. Note that the definition of trace is independent of the choice of basis. 
\end{enumerate}
\end{example}

We are now in a position to give the general statement as to why the only codifferential category structure on $\mathsf{FHilb}$ is the trivial one. To do so, we must assume additional assumptions on the endomorphisms of the monoidal unit. This hom-set is sometimes referred to as the set of scalars of a monoidal category, and in the case of an additive symmetric monoidal category, the internal semi-ring. 

\begin{theorem}\label{mainthm} Let $\mathbb{X}$ be an additive traced symmetric monoidal category with monoidal unit $K$ such that:
\begin{enumerate}[{\em (i)}]
\item $\mathbb{X}(K,K)$ is additively cancellative, that is, if $f + g = f+ h$ then $g=h$ for all $f,g,h \in \mathbb{X}(K,K)$; 
\item For any objects $A$ and $B$, $\mathsf{Tr}^A(1_A) \otimes \mathsf{Tr}^B(1_B) = 0$ if and only if $A$ or $B$ is a zero object.
\end{enumerate}
Then any codifferential category structure on $\mathbb{X}$ is trivial (Definition \ref{trivdef}).
\end{theorem}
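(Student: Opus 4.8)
The plan is to reproduce the trace argument of Proposition~\ref{ST} internally: out of the algebra modality and the deriving transformation I will manufacture a categorical analogue of the Weyl relation $\partial x - x\partial = 1$, and then derive a contradiction by taking the categorical trace of both sides. First I would fix an object $A$ and introduce the \emph{creation} map $\mathsf{c} := \mathsf{m}\circ(1_{\mathsf{S}A}\otimes\eta)\colon \mathsf{S}A\otimes A\to\mathsf{S}A$, regarding $\mathsf{d}\colon\mathsf{S}A\to\mathsf{S}A\otimes A$ as the associated annihilation map. The first substantive step is to compute $\mathsf{d}\circ\mathsf{c}\colon\mathsf{S}A\otimes A\to\mathsf{S}A\otimes A$: expanding $\mathsf{d}\circ\mathsf{m}$ by the product rule and precomposing with $1_{\mathsf{S}A}\otimes\eta$, the summand that differentiates the $\eta$-factor collapses, via the linear rule ($\mathsf{d}\circ\eta = \mathsf{u}\otimes 1_A$) together with the monoid unit law ($\mathsf{m}\circ(1_{\mathsf{S}A}\otimes\mathsf{u}) = 1_{\mathsf{S}A}$), to $1_{\mathsf{S}A\otimes A}$, while the summand differentiating the $\mathsf{S}A$-factor rearranges to $(\mathsf{c}\otimes 1_A)\circ(1_{\mathsf{S}A}\otimes\sigma)\circ(\mathsf{d}\otimes 1_A)$, where $\sigma$ is the symmetry of $A\otimes A$. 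This produces the internal Weyl relation
\[
\mathsf{d}\circ\mathsf{c} \;=\; 1_{\mathsf{S}A\otimes A} \,+\, (\mathsf{c}\otimes 1_A)\circ(1_{\mathsf{S}A}\otimes\sigma)\circ(\mathsf{d}\otimes 1_A).
\]

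Next I would apply the full trace $\mathsf{Tr}^{\mathsf{S}A\otimes A}$ (equivalently $\mathsf{Tr}^{\mathsf{S}A}\circ\mathsf{Tr}^{A}$, by Vanishing) to both sides. Because the category is additive traced (Definition~\ref{addtdef}), the trace preserves sums, so the right-hand side splits. On one side, Superposition and Vanishing give $\mathsf{Tr}^{\mathsf{S}A\otimes A}(1_{\mathsf{S}A\otimes A}) = \mathsf{Tr}^{\mathsf{S}A}(1_{\mathsf{S}A})\otimes\mathsf{Tr}^{A}(1_A)$. The crucial observation is that the two remaining traces are \emph{equal}: by cyclicity of the categorical trace (the Sliding axiom with $A=B=K$, applied to the two legs $\mathsf{c}\colon\mathsf{S}A\otimes A\to\mathsf{S}A$ and $\mathsf{d}\colon\mathsf{S}A\to\mathsf{S}A\otimes A$) one gets $\mathsf{Tr}^{\mathsf{S}A\otimes A}(\mathsf{d}\circ\mathsf{c}) = \mathsf{Tr}^{\mathsf{S}A}(\mathsf{c}\circ\mathsf{d})$; and tracing the external $A$-wire of the third term through the symmetry (a Yanking computation) shows $\mathsf{Tr}^{A}$ of that term equals $\mathsf{c}\circ\mathsf{d}$, so it too has full trace $\mathsf{Tr}^{\mathsf{S}A}(\mathsf{c}\circ\mathsf{d})$. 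Cancelling this common scalar via hypothesis~(i) leaves $\mathsf{Tr}^{\mathsf{S}A}(1_{\mathsf{S}A})\otimes\mathsf{Tr}^{A}(1_A) = 0$, whereupon hypothesis~(ii) forces $\mathsf{S}A$ or $A$ to be a zero object.

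Finally I would promote this to the statement that $\mathsf{S}A$ is a zero object for \emph{every} $A$. Suppose $\mathsf{S}A$ is not a zero object; then applying the previous paragraph with $\mathsf{S}A$ in place of $A$ shows $\mathsf{S}\mathsf{S}A$ must be a zero object (as $\mathsf{S}A$ is nonzero). Hence $\eta_{\mathsf{S}A}\colon\mathsf{S}A\to\mathsf{S}\mathsf{S}A$ is the zero map, and the monad unit law $\mu_A\circ\eta_{\mathsf{S}A} = 1_{\mathsf{S}A}$ forces $1_{\mathsf{S}A}=0$, i.e.\ $\mathsf{S}A$ is a zero object after all, a contradiction. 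Thus $\mathsf{S}A$ is a zero object for all $A$, so by Definition~\ref{trivdef} the codifferential structure is trivial. I expect the main obstacle to be the trace bookkeeping of the middle paragraph: showing, from Sliding, Vanishing and Yanking alone, that both $\mathsf{Tr}(\mathsf{d}\circ\mathsf{c})$ and the trace of its ``transpose'' reduce to the trace of the number operator $\mathsf{c}\circ\mathsf{d}$ and therefore cancel. This categorical cyclicity is exactly where the step $\mathsf{Tr}(\sigma(x)\sigma(y)) = \mathsf{Tr}(\sigma(y)\sigma(x))$ of Proposition~\ref{ST} is being reproduced.
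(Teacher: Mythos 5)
Your proposal is correct and follows essentially the same route as the paper: your creation map $\mathsf{c}=\mathsf{m}\circ(1_{\mathsf{S}A}\otimes\eta)$ is exactly the coderiving transformation $\mathsf{d}^\circ$ used there, your internal Weyl relation is the identity the paper imports from the literature on deriving/coderiving transformations, and the trace--cancel--retract endgame (applying the scalar identity to $\mathsf{S}A$ so that the monad unit law $\mu\circ\eta_{\mathsf{S}A}=1_{\mathsf{S}A}$ exhibits $\mathsf{S}A$ as a retract of the zero object $\mathsf{S}\mathsf{S}A$) is the same. The only cosmetic difference is that you phrase the last step as a contradiction where the paper argues directly by cases.
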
 
\begin{proof}Before starting the proof, we should first address these additional assumptions. Assumption (i) generalizes the need for additive inverses in the proof of Proposition \ref{ST}. On the other hand, recall that in a symmetric traced monoidal category, $\mathsf{Tr}^A(1_A)$ is regarded as the dimension of $A$. Then assumption (ii) is a sort of integral domain requirement that implies that if the dimension of $A \otimes B$ is $0$, then either $A$ or $B$ has dimension $0$. But assumption (ii) is slightly stronger by also requiring that if $A$ has dimension $0$ then $A$ is a zero object. This addresses the link between traces and dimensions in the proof of Proposition \ref{ST}. 

Now let $(\mathsf{S}, \mu, \eta, \mathsf{m}, \mathsf{u})$ be an algebra modality on $\mathbb{X}$ with deriving transformation $\mathsf{d}$. First define the natural transformation $\mathsf{d}^\circ$, known as the \textbf{coderiving transformation} \cite{cockettlemay2018}, as follows: 
\[   \begin{array}[c]{c}\resizebox{!}{1.25cm}{%
\begin{tikzpicture}
	\begin{pgfonlayer}{nodelayer}
		\node [style=none] (52) at (7.5, 1.75) {};
		\node [style=function1] (53) at (7.5, 1.75) {$\mathsf{m}$};
		\node [style=function2] (54) at (6, 1.25) {$\eta$};
		\node [style=none] (56) at (7, 1.5) {};
		\node [style=none] (57) at (7, 2) {};
		\node [style=object] (58) at (8.75, 1.75) {$\mathsf{S}A$};
		\node [style=object] (59) at (4.75, 1.75) {$:=$};
		\node [style=none] (63) at (2.75, 1.75) {};
		\node [style=function1] (64) at (2.75, 1.75) {$\mathsf{d}^\circ$};
		\node [style=object] (65) at (1.25, 2) {$\mathsf{S}A$};
		\node [style=object] (66) at (1.25, 1.5) {$A$};
		\node [style=none] (67) at (2.25, 2) {};
		\node [style=none] (68) at (2.25, 1.5) {};
		\node [style=object] (70) at (5.25, 2.25) {$\mathsf{S}A$};
		\node [style=object] (71) at (5.25, 1.25) {$A$};
		\node [style=object] (72) at (4.25, 1.75) {$\mathsf{S}A$};
		\node [style=none] (79) at (6.5, 2.25) {};
		\node [style=none] (80) at (6.5, 1.25) {};
	\end{pgfonlayer}
	\begin{pgfonlayer}{edgelayer}
		\draw [style=post] (53) to (58);
		\draw [style=wire] (65) to (67.center);
		\draw [style=wire] (66) to (68.center);
		\draw [style=wire] (71) to (54);
		\draw [style=wire] (54) to (80.center);
		\draw [style=wire] (80.center) to (56.center);
		\draw [style=wire] (79.center) to (57.center);
		\draw [style=post] (64) to (72);
		\draw [style=wire] (70) to (79.center);
	\end{pgfonlayer}
\end{tikzpicture}
  }%
\end{array}
\]
From a Fock space perspective, as explained by Fiore in \cite{fiore2015axiomatics}, the coderiving transformation is an annihilation operator \cite{geroch1985mathematical}. Using the linear rule and the product rule, it is easy to check that the deriving transformation and coderiving transformation satisfy the following identity \cite[Proposition 4.1]{cockettlemay2018}:
\[\begin{array}[c]{c}\resizebox{!}{0.85cm}{%
\begin{tikzpicture}
\node[myblock]
  (G) 
  {$\mathsf{d}$};
    \node[myblock,left=15pt of {G}]
  (H) 
  {$\mathsf{d}^\circ$};
  \draw[zeroarrow={0}{}]
  ([shift={(-15pt,-0.25cm)}]H.north west) -- ([shift={(0pt,-0.25cm)}]H.north west);
  \draw[zeroarrow={0}{}]
  ([shift={(-15pt,-0.75cm)}]H.north west) -- ([shift={(0pt,-0.75cm)}]H.north west);
\draw[zeroarrow={0}{}]
  ([shift={(-15pt,-0.5cm)}]G.north west) -- ([shift={(0pt,-0.5cm)}]G.north west);
\draw[onearrow={1}{}]
  ([shift={(0pt,-0.25cm)}]G.north east) -- ([shift={(25pt,-0.25cm)}]G.north east);
  \draw[onearrow={1}{}]
  ([shift={(0pt,-0.75cm)}]G.north east) -- ([shift={(25pt,-0.75cm)}]G.north east);
\end{tikzpicture}}%
\end{array} \!=\! \begin{array}[c]{c}\resizebox{!}{1.15cm}{%
\begin{tikzpicture}
\node[myblock]
  (G) 
  {$\mathsf{d}$};
    \node[myblock,right=25pt of {G}]
  (H) 
  {$\mathsf{d}^\circ$};
    \draw[zeroarrow={0}{}]
  ([shift={(-15pt,-1.25cm)}]G.north west) -- ([shift={(29pt,-1.25cm)}]G.north west);
  \draw[zeroarrow={0}{}]
  ([shift={(0,-0.75cm)}]G.north east) -- ([shift={(25pt,-1.25cm)}]G.north east);
  \draw[zeroarrow={0}{}]
  ([shift={(-25pt,-0.25cm)}]H.north west) -- ([shift={(0pt,-0.25cm)}]H.north west);
\draw[zeroarrow={0}{}]
  ([shift={(-15pt,-0.5cm)}]G.north west) -- ([shift={(0pt,-0.5cm)}]G.north west);
\draw[onearrow={1}{}]
  ([shift={(25pt,-1.25cm)}]G.north east) -- ([shift={(75pt,-1.25cm)}]G.north east);
  \draw[onearrow={1}{}]
  ([shift={(0pt,-0.5cm)}]H.north east) -- ([shift={(25pt,-0.5cm)}]H.north east);
    \draw[zeroarrow={0}{}]
  ([shift={(-25pt,-1.25cm)}]H.north west) -- ([shift={(0pt,-0.75cm)}]H.north west);
\end{tikzpicture}}%
\end{array} +     \begin{array}[c]{c}\resizebox{!}{0.75cm}{%
\begin{tikzpicture}
  \node[]
  (S) 
  {};
  \draw[zeroarrow={0}{}]
  ([shift={(-30pt,-0.25cm)}]S.north west) -- ([shift={(5pt,-0.25cm)}]S.north west);
\draw[onearrow={1}{}]
  ([shift={(-5pt,-0.25cm)}]S.north east) -- ([shift={(20pt,-0.25cm)}]S.north east);
    \draw[zeroarrow={0}{}]
  ([shift={(-30pt,0.25cm)}]S.north west) -- ([shift={(5pt,0.25cm)}]S.north west);
\draw[onearrow={1}{}]
  ([shift={(-5pt,0.25cm)}]S.north east) -- ([shift={(20pt,0.25cm)}]S.north east);
  \end{tikzpicture} }%
   \end{array} \] 
   This identity shows that $\mathsf{d}$ and $\mathsf{d}^\circ$ play the role of $x$ and $y$ in Proposition \ref{ST}. In fact, this above identity is well known relation between creation operators and annihilation operators in Fock spaces \cite{fiore2015axiomatics, geroch1985mathematical}. 
   
From the above identity, we obtain the following:
\begin{align*}
\begin{array}[c]{c}\resizebox{!}{1.25cm}{%
\begin{tikzpicture}
\node[myblock]
  (G) 
  {$\mathsf{d}$};
    \node[myblock,right=25pt of {G}]
  (H) 
  {$\mathsf{d}^\circ$};
  \draw[rounded corners,onearrow={0.525}{}]
  ([shift={(0pt,0.5cm)}]H.south east) -|
  ([shift={(15pt,10pt)}]H.north east) --
  ([shift={(-15pt,10pt)}]G.north west) |-
  ([shift={(0pt,0.5cm)}]G.south west);
  \draw[zeroarrow={0}{}]
  ([shift={(-25pt,-0.25cm)}]H.north west) -- ([shift={(0pt,-0.25cm)}]H.north west);
    \draw[zeroarrow={0}{}]
  ([shift={(0pt,-0.75cm)}]G.north east) -- ([shift={(0pt,-0.75cm)}]H.north west);
\end{tikzpicture}}%
\end{array} &= \begin{array}[c]{c}\resizebox{!}{1.5cm}{%
\begin{tikzpicture}
\node[myblock]
  (G) 
  {$\mathsf{d}$};
    \node[myblock,left=15pt of {G}]
  (H) 
  {$\mathsf{d}^\circ$};
    \draw[rounded corners,onearrow={0.525}{}]
  ([shift={(0pt,0.75cm)}]G.south east) -|
  ([shift={(15pt,10pt)}]G.north east) --
  ([shift={(-15pt,10pt)}]H.north west) |-
  ([shift={(0pt,0.75cm)}]H.south west);
      \draw[rounded corners,onearrow={0.525}{}]
  ([shift={(0pt,0.25cm)}]G.south east) -|
  ([shift={(25pt,20pt)}]G.north east) --
  ([shift={(-25pt,20pt)}]H.north west) |-
  ([shift={(0pt,0.25cm)}]H.south west);
\draw[zeroarrow={0}{}]
  ([shift={(-15pt,-0.5cm)}]G.north west) -- ([shift={(0pt,-0.5cm)}]G.north west);
\end{tikzpicture}}%
\end{array} \\
&= \begin{array}[c]{c}\resizebox{!}{1.75cm}{%
\begin{tikzpicture}
\node[myblock]
  (G) 
  {$\mathsf{d}$};
    \node[myblock,right=25pt of {G}]
  (H) 
  {$\mathsf{d}^\circ$};
    \draw[rounded corners,onearrow={0.525}{}]
  ([shift={(0pt,0.5cm)}]H.south east) -|
  ([shift={(15pt,10pt)}]H.north east) --
  ([shift={(-15pt,10pt)}]G.north west) |-
  ([shift={(0pt,0.5cm)}]G.south west);
        \draw[rounded corners,onearrow={0.525}{}]
  ([shift={(-29pt,-0.25cm)}]H.south east) -|
  ([shift={(25pt,20pt)}]H.north east) --
  ([shift={(-25pt,20pt)}]G.north west) |-
  ([shift={(29pt,-0.25cm)}]G.south west);
  \draw[zeroarrow={0}{}]
  ([shift={(0,-0.75cm)}]G.north east) -- ([shift={(25pt,-1.25cm)}]G.north east);
  \draw[zeroarrow={0}{}]
  ([shift={(-25pt,-0.25cm)}]H.north west) -- ([shift={(0pt,-0.25cm)}]H.north west);
    \draw[zeroarrow={0}{}]
  ([shift={(-25pt,-1.25cm)}]H.north west) -- ([shift={(0pt,-0.75cm)}]H.north west);
\end{tikzpicture}}%
\end{array} + \begin{array}[c]{c}\resizebox{!}{1.25cm}{%
\begin{tikzpicture}
\node[]
  (G) 
  {};
    \node[]
  (H) 
  {};
    \draw[rounded corners,onearrow={0.525}{}]
  ([shift={(-3pt,5pt)}]H.south east) -|
  ([shift={(15pt,10pt)}]H.north east) --
  ([shift={(-15pt,10pt)}]G.north west) |-
  ([shift={(5pt,5pt)}]G.south west);
        \draw[rounded corners,onearrow={0.525}{}]
  ([shift={(-3pt,-0.25cm)}]H.south east) -|
  ([shift={(25pt,20pt)}]H.north east) --
  ([shift={(-25pt,20pt)}]G.north west) |-
  ([shift={(5pt,-0.25cm)}]G.south west);
\end{tikzpicture}}%
\end{array} \\
&= \begin{array}[c]{c}\resizebox{!}{1.25cm}{%
\begin{tikzpicture}
\node[myblock]
  (G) 
  {$\mathsf{d}$};
    \node[myblock,right=25pt of {G}]
  (H) 
  {$\mathsf{d}^\circ$};
  \draw[rounded corners,onearrow={0.525}{}]
  ([shift={(0pt,0.5cm)}]H.south east) -|
  ([shift={(15pt,10pt)}]H.north east) --
  ([shift={(-15pt,10pt)}]G.north west) |-
  ([shift={(0pt,0.5cm)}]G.south west);
  \draw[zeroarrow={0}{}]
  ([shift={(-25pt,-0.25cm)}]H.north west) -- ([shift={(0pt,-0.25cm)}]H.north west);
    \draw[zeroarrow={0}{}]
  ([shift={(0pt,-0.75cm)}]G.north east) -- ([shift={(0pt,-0.75cm)}]H.north west);
\end{tikzpicture}}%
\end{array} + \begin{array}[c]{c}\resizebox{!}{0.75cm}{%
\begin{tikzpicture}
\node[left=-5pt of {H}]
  (G) 
  {};
    \node[]
  (H) 
  {};
    \draw[rounded corners,onearrow={0.525}{}]
  ([shift={(-3pt,5pt)}]H.south east) -|
  ([shift={(15pt,10pt)}]H.north east) --
  ([shift={(-15pt,10pt)}]H.north west) |-
  ([shift={(5pt,5pt)}]H.south west);
      \draw[rounded corners,onearrow={0.525}{}]
  ([shift={(-3pt,5pt)}]G.south east) -|
  ([shift={(15pt,10pt)}]G.north east) --
  ([shift={(-15pt,10pt)}]G.north west) |-
  ([shift={(5pt,5pt)}]G.south west);
\end{tikzpicture}}%
\end{array}
\end{align*} 
By assumption (i), $\mathbb{X}(K,K)$ is additively cancellative, so we obtain that ${0 = \mathsf{Tr}(1_{\mathsf{S}A}) \otimes \mathsf{Tr}(1_A)}$ for any object $A$. In particular for $\mathsf{S}A$, we have that $0 = \mathsf{Tr}(1_{\mathsf{S}\mathsf{S}A}) \otimes \mathsf{Tr}(1_{\mathsf{S}A})$. By assumption (ii), either $\mathsf{S}\mathsf{S}A$ is a zero object or $\mathsf{S}A$ is a zero object. If $\mathsf{S}A$ is a zero object we are done. So suppose that $\mathsf{S}\mathsf{S}A$ is a zero object. By the monad identities, $\mathsf{S}A$ is a retract of $\mathsf{S}\mathsf{S}A$. But a retract of a zero object is again a zero object. Therefore $\mathsf{S}A$ is a zero object. 
\end{proof} 

\begin{corollary}\label{maincor} Let $\mathbb{K}$ be a field of characteristic $0$. The only codifferential category structure on $\mathsf{FVEC}_\mathbb{K}$ is the trivial one. In particular, the only codifferential category structure on $\mathsf{FHilb}$ is the trivial one. 
\end{corollary}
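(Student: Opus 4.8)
The plan is to verify that $\mathsf{FVEC}_\mathbb{K}$ satisfies the two hypotheses of Theorem \ref{mainthm} and then invoke that theorem directly. By Example \ref{tracex}.ii, $\mathsf{FVEC}_\mathbb{K}$ is already an additive traced symmetric monoidal category with the standard partial trace, and its monoidal unit is $K = \mathbb{K}$; so only assumptions (i) and (ii) remain to be checked.

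For assumption (i), I would observe that $\mathbb{X}(K,K) = \mathsf{FVEC}_\mathbb{K}(\mathbb{K},\mathbb{K}) \cong \mathbb{K}$, since a $\mathbb{K}$-linear map $\mathbb{K} \to \mathbb{K}$ is precisely multiplication by a scalar. As $\mathbb{K}$ is a field, its underlying additive monoid is an abelian group, and every group is additively cancellative; hence (i) holds.

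For assumption (ii), the key step is to identify $\mathsf{Tr}^A(1_A)$ with the dimension of $A$. Applying the partial-trace formula of Example \ref{tracex}.ii with $V = W = \mathbb{K}$ to the identity map, one computes $\mathsf{Tr}^A(1_A) = \mathsf{dim}_\mathbb{K}(A) \cdot 1_\mathbb{K} \in \mathbb{K}$. Moreover, under the isomorphism $\mathbb{X}(K,K) \cong \mathbb{K}$ the monoidal product $\otimes$ of two scalars corresponds to their product in $\mathbb{K}$ (as $K \otimes K \cong K$), so that
\[ \mathsf{Tr}^A(1_A) \otimes \mathsf{Tr}^B(1_B) = \left( \mathsf{dim}_\mathbb{K}(A)\, \mathsf{dim}_\mathbb{K}(B) \right) \cdot 1_\mathbb{K}. \]
This is exactly where characteristic zero enters: since $\mathbb{K}$ is a field, hence an integral domain, and has characteristic zero, the scalar $n \cdot 1_\mathbb{K}$ vanishes if and only if the non-negative integer $n$ is zero. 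Hence the right-hand side is $0$ if and only if $\mathsf{dim}_\mathbb{K}(A) = 0$ or $\mathsf{dim}_\mathbb{K}(B) = 0$, equivalently if and only if $A$ or $B$ is the zero vector space, which is precisely the zero object of $\mathsf{FVEC}_\mathbb{K}$. This establishes (ii), and Theorem \ref{mainthm} then yields that every codifferential category structure on $\mathsf{FVEC}_\mathbb{K}$ is trivial.

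For $\mathsf{FHilb}$ I would simply note that $\mathbb{C}$ is a field of characteristic zero and that $\mathsf{FHilb}$ is equivalent to $\mathsf{FVEC}_\mathbb{C}$ (Example \ref{addex}.iii); since such an equivalence transports (co)differential category structure, the conclusion for $\mathsf{FVEC}_\mathbb{C}$ carries over to $\mathsf{FHilb}$. The only genuinely delicate point --- and the one I would take care over --- is the identification $\mathsf{Tr}^A(1_A) = \mathsf{dim}_\mathbb{K}(A) \cdot 1_\mathbb{K}$ together with the integral-domain argument, since this is the precise spot where characteristic zero is needed and where the analogous statement would fail for a field such as $\mathbb{Z}_2$.
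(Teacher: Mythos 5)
Your proposal is correct and follows essentially the same route as the paper's own proof: identify $\mathsf{FVEC}_\mathbb{K}(\mathbb{K},\mathbb{K})\cong\mathbb{K}$, get cancellativity from additive inverses, use the integral-domain property plus $\mathsf{Tr}^A(1_A)=\mathsf{dim}_\mathbb{K}(A)\cdot 1_\mathbb{K}$ and characteristic zero to verify assumption (ii), and then invoke Theorem \ref{mainthm}. Your extra care in spelling out the trace computation and the transport along the equivalence $\mathsf{FHilb}\simeq\mathsf{FVEC}_\mathbb{C}$ is a harmless refinement of the same argument.
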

\begin{proof} Let $\mathbb{K}$ be an arbitrary field. Then one has that $\mathsf{FVEC}_\mathbb{K}(\mathbb{K}, \mathbb{K}) \cong \mathbb{K}$, that is, linear maps from $\mathbb{K}$ to $\mathbb{K}$ correspond to precisely to elements of $\mathbb{K}$. Therefore, from now on, we will not distinguish between the two. In particular, assumption (i) of Theorem \ref{mainthm} holds since $\mathbb{K}$ is a field and so has additive inverses, which implies that $\mathbb{K}$ is additively cancellative. Regarding assumption (ii) of Theorem \ref{mainthm}, note that for any $r,s \in \mathbb{K}$, $r \otimes s = rs$. Therefore since $\mathbb{K}$ is a field, $\mathsf{Tr}^V(1_V) \otimes \mathsf{Tr}^W(1_W)=0$ if and only if $\mathsf{Tr}^V(1_V)=0$ or $\mathsf{Tr}^W(1_W)=0$. It is at this point that we require the additional assumption of having characteristic $0$. If $\mathbb{K}$ has characteristic $0$, then $V$ has dimension $n$ if and only if  $\mathsf{Tr}^V(1_V) = n$. Therefore, $V$ is a zero object if and only if $\mathsf{Tr}^V(1_V) = 0$. Putting everything together, so we conclude that in the case $\mathbb{K}$ has characteristic $0$, $\mathsf{FVEC}_\mathbb{K}$ is an additive traced symmetric monoidal category which satisfies assumption (i) and (ii) of Theorem \ref{mainthm}. 
\end{proof}

It is also worth noting that by self-duality of $\mathsf{FVEC}_\mathbb{K}$ (for a field $\mathbb{K}$ of characteristic $0$) and $\mathsf{FHilb}$, it also follows that the only differential category structure on either of them is the trivial one. In fact, under the same assumptions of Theorem \ref{mainthm}, one can also show that the only differential category structure for such a category is the trivial one. 

\begin{example}\label{failex} \normalfont In this paper we provided two non-trivial examples of codifferential categories with trace. Here we explain why each satisfies one of the assumptions of Theorem \ref{mainthm} but not the other, opening the door to non-trivial codifferential category structure. 
\begin{enumerate}[{\em (i)}]
\item Recall that in $\mathsf{REL}$ identity maps are the diagonal relations $\Delta_X = \lbrace (x,x) \vert~ x \in X \rbrace$ and the zero maps are the empty sets. In particular, $\mathsf{REL}(\lbrace \ast \rbrace, \lbrace \ast \rbrace) = \lbrace \emptyset, \Delta_{\lbrace \ast \rbrace} \rbrace$. Since addition is defined as the union of sets, $\mathsf{REL}(\lbrace \ast \rbrace, \lbrace \ast \rbrace)$ is not additively cancellative since in particular $\Delta_{\lbrace \ast \rbrace} + \Delta_{\lbrace \ast \rbrace} = \emptyset + \Delta_{\lbrace \ast \rbrace}$ but $\Delta_{\lbrace \ast \rbrace} \neq \emptyset$. Therefore $\mathsf{REL}$ fails assumption (i) of Theorem \ref{mainthm}. On the other hand $\emptyset$ is also the zero object in $\mathsf{REL}$, and one has that $\mathsf{Tr}^X(\Delta_X) = \emptyset$ if and only if $X = \emptyset$. From here, it is easy to check that $\mathsf{REL}$ satisfies assumption (ii) of Theorem \ref{mainthm}. 
\item As explained in the proof of Corollary \ref{maincor}, for any field $\mathbb{K}$,  $\mathsf{FVEC}_\mathbb{K}$ satisfies assumption (i) of Theorem \ref{mainthm}, and so in particular, $\mathsf{FVEC}_{\mathbb{Z}_2}$ does. On the other hand, $\mathbb{Z}_2$ has characteristic $2$ and so one can check that $\mathsf{Tr}^X(id_{\mathbb{Z}_2 \oplus \mathbb{Z}_2}) =0$. But $\mathbb{Z}_2 \oplus \mathbb{Z}_2$ is not a zero object, and so we can conclude that $\mathsf{FVEC}_{\mathbb{Z}_2}$ fails assumption (ii) of Theorem \ref{mainthm}.
\end{enumerate}
\end{example}

\section{Future Work}

In this paper, we showed that under the assumptions of Theorem \ref{mainthm}, the trace and the deriving transformation are incompatible, therefore providing necessary conditions for when a category only has trivial (co)differential category structure. In particular, this implies that $\mathsf{FHilb}$ is not an interesting (co)differential category. While this result may seem somewhat ``negative'', it does point in which direction to further study applications of differential categories to categorical quantum foundations. Here are some suggestions for potential directions and future work: 
\begin{enumerate}[{\em (i)}]
\item As mentioned in the introduction, there are interesting examples of non-trivial (co)differential categories with relations to quantum foundations in the literature, in particular relating to the Fock space. All these examples require ``infinite dimensional'' objects, and as we have shown that the ``finite dimensional'' case of $\mathsf{FHilb}$ is not interesting, it makes sense to look towards infinite dimensions. Potential interesting candidates for new (co)differential categories include the category of super-vector spaces and the category of all Hilbert spaces. 
\item The overall theory of differential categories can be split into four parts, each building on the previous part. The first part is differential categories, which provide the basic algebraic foundations of differentiation. The second part is Cartesian differential categories \cite{blute2009cartesian}, which axiomatizes differential calculus on Euclidean spaces. The third part is restriction differential categories \cite{cockett2011differential}, which studies differential calculus for partial functions. And the fourth part is tangent categories \cite{cockett2014differential}, which generalizes the theory of smooth manifolds and their tangent bundles. While $\mathsf{FHilb}$ may not be an interesting codifferential category, the category of finite-dimensional Hilbert spaces and smooth maps between them is a Cartesian differential category (and therefore also a tangent category). Hence, Cartesian differential categories, restriction differential categories, and tangent categories have many possible applications to categorical quantum foundations. 
\item We provided two examples of non-trivial trace/compact closed codifferential categories, both of which are quite interesting in their own way. $\mathsf{REL}$ is a fundamental example throughout theoretical computer science, in particular an important example in both categorical quantum foundations and the theory of differential categories. The codifferential category structure on $\mathsf{FVEC}_{\mathbb{Z}_2}$ has potential links to the ZW-calculus \cite{ZWcalc} and the CNOT category \cite{cockett2017category}. Therefore these two interesting examples should motivate the further study of traced/compact closed (co)differential categories. \end{enumerate}

\bibliographystyle{eptcs}
\bibliography{generic}
\end{document}